\begin{document}

\title{Dynamic boundary conditions for membranes whose surface energy depends on the mean and Gaussian curvatures}

\subtitle{}

\titlerunning{Energy depending on the mean and Gaussian curvatures}        

\author{Sergey Gavrilyuk\  \ and \,  Henri Gouin}

\institute{ \at Aix Marseille Univ, CNRS, IUSTI, \  UMR 7343,  Marseille, France\\  \email \\  sergey.gavrilyuk@univ-amu.fr,  \\     {henri.gouin@univ-amu.fr; henri.gouin@yahoo.fr}     
      }
\date{Accepted: 21-02-2019 - Mathematics and Mechanics of Complex Systems (ISSN : 2326-7186, ESSN : 2325-3444)}

\maketitle

\begin{abstract}
	 
	Membranes are an   important subject of study in physical chemistry and  biology. They can be  considered as material surfaces with a  surface energy  depending  on the curvature tensor. Usually,   mathematical models developed in the literature consider the  dependence of surface energy  only on  mean curvature with an added linear term for  Gauss curvature.  
	Therefore, for closed surfaces the Gauss curvature term can be eliminated because of  the Gauss-Bonnet theorem. In \cite{virga}, the dependence on the mean and Gaussian curvatures was considered in statics   and under a restrictive assumption of the membrane inextensibility. The authors derived the shape equation as well as two scalar boundary conditions on the contact line. 
	
	In this paper -- thanks to the principle of virtual working -- the equations of motion and boundary conditions governing
	the fluid membranes subject to general dynamical bending are derived without  the membrane inextensibility assumption. We   obtain the dynamic `shape equation' (equation for the membrane surface) and the dynamic conditions on the contact line  generalizing  the  classical Young-Dupr\'e condition.    
	
\PACS{45.20.dg,   68.03.Cd, 68.35.Gy,
02.30.Xx.}
\subclass{74K15, 76Z99, 92C37.}
\end{abstract}

\section{Introduction}

The study of equilibrium, for small wetting droplets placed on a curved rigid surface, is an old problem of continuum mechanics. When the droplets' size is  of micron range  the droplet volume energy can be neglected.  The surface energy of the surface $S$ can be expressed in the form :
\begin{equation*} 
E=\iint_S\sigma \; ds,
\label{_very_general_energy}
\end{equation*}
where $\sigma$ denotes the energy per unit surface. Two types of surfaces are present in physical problems:
\begin{itemize}
\item rigid surfaces (only the kinematic boundary condition is imposed)
\item free surfaces (both the kinematic and dynamic boundary conditions are imposed)
\end{itemize}
 We will  see the difference between the energy variation in the case of rigid and free surfaces. 
 
 The simplest case corresponds to a constant surface energy $\sigma$, but in general,  $\sigma$  also  depends on physical parameters (temperature, surfactant concentrations, etc.  \cite{Gouin_2014,Rocard,Steigmann1}) and geometrical parameters (invariants of  curvature tensor). 
 The last case is important in biology  and,  in particular, in the dynamics of {\it vesicles} \cite{Alberts,Lipowsky,seifert}. Vesicles are small liquid droplets with a diameter of a few tens of micrometers, bounded by an impermeable lipid membrane of a few nanometers thick.  The membranes are homogeneous down to molecular dimensions. Consequently, it is possible to model the boundary of vesicle as a two-dimensional smooth surface  whose energy per unit surface $\sigma$ is a function  both of the sum (denoted by $H$) and product (denoted by $K$) of principal curvatures of  the curvature tensor : 
 \begin{equation*}
 \sigma=\sigma (H, K).
 \end{equation*}
In  mathematical description of biological membranes, one often uses the Helfrich energy \cite{Helfrich,Tu} :
\begin{equation} 
\sigma (H,K)=\sigma_0+\frac{\kappa}{2}(H-H_0)^2+\bar\kappa K,
\label{Helfrich_energy}
\end{equation}
where $\sigma_0$,  $H_0$, $\kappa$  and $\bar\kappa$ are dimensional constants. 
Another purely mathematical example is the Wilmore energy  \cite{Willmore} :
\begin{equation*} 
\sigma (H,K)=H^2-4K.
\label{Willmore_energy}
\end{equation*}
This   energy measures  the ``roundness" of   the free surface. For  a given volume, this energy is minimal in case of  spheres. 
One can also propose another surface energy in the form :
\begin{equation*} 
\sigma=\sigma_0+h_0(H^2-H_0^2)^2+k_0 (K-K_0)^2,
\label{our_energy}
\end{equation*}
where $\sigma_0$, $h_0$, $H_0$, $k_0$ and $K_0$ are dimensional constants.
This kind of  energy is invariant under the change of   sign of   principal curvatures, ({\it i.e.} the change of sign yields  $\; H\rightarrow -H,\; K\rightarrow K$). It can thus  describe the `mirror buckling' phenomenon : a portion of the membrane inverts to form a cap with equal but opposite principal curvatures.   It is also a homogeneous function of degree four with respect to  principal curvatures.

The equilibrium for membranes (called ``shape equation"  by Helfrich) is formulated   in numerous papers and references herein \cite{Biscari,Capovilla,Fournier,Helfrich,Napoli,Helfrich1}. The ``edge conditions" (boundary conditions at the contact line) are formulated in few papers and only in statics.  In particular,  in \cite{virga} the shape equation and two boundary conditions are formulated for the general dependence $\sigma (H,K)$ under the   assumption of the membrane inextensibility.   However, the boundary conditions obtained do not contain the classical Young-Dupr\'{e}   condition for the constant surface energy.  In the case when the energy depends only on $H$  the generalization of Young-Dupr\'{e} condition   was  obtained in  \cite{Gouin_2014_vesicle}.

The aim of our paper is to develop the theory of moving membranes  which are  in contact   with a  solid surface.  The surface energy of the membrane will be a function both of $H$ and $K$. We  obtain a set of  boundary conditions on the  moving interfaces (membranes)  as well as on the moving edges. 

The motion of a continuous medium is 
represented by a diffeomorphism  $\boldsymbol{\phi}$ of a
three-dimensional reference configuration $D_0$ into the physical space. In order to analytically
describe the transformation,  variables
$\mathbf{X}=(X^{1},X^{2},X^{3})^T$  single out individual
particles corresponding to material or Lagrangian coordinates, subscript ``${^T} $'' means the transposition. The
transformation representing the motion of a continuous medium occupying the material volume $D_t$ is :
\begin{equation*}
\mathbf{x}= {\boldsymbol\phi}\left(t, \mathbf{X}\right) \text{ \ or \
}x^{i}=\phi ^{i}(t, X^{1},X^{2},X^{3})\, , \  i =\{1, 2, 3\} ,
\label{motion}
\end{equation*}
where $t$ denotes the time and $\mathbf{x}=(x^{1},x^{2},x^{3})^T$ denote the Eulerian coordinates.  At $t$ fixed, the transformation
possesses an inverse and has continuous derivatives up to the second order (the dependence of the surface energy on the curvature tensor  will regularize the solutions, so the cusps and shocks do not appear).

At equilibrium, the unit normal vector to a static surface $\varphi_0({\mathbf x})=0$  is the gradient of 
the so-called {\it signed distance function} defined as follows.
Let   
\begin{equation} 
d(\mathbf{x})=
\left\{
\begin{array}{l}\ \ {\rm min}\lvert\mathbf{x}-\boldsymbol{\xi}\rvert, \; {\rm if }\; \varphi_0>0, \\
\ \ 0,\; {\rm if }\; \varphi_0=0, \\
-{\rm min}\lvert\mathbf{x}-\boldsymbol{\xi}\rvert, \; {\rm if }\; \varphi_0<0,\\
\end{array}
\right. 
\label{evolution_interface}
\end{equation}
where the minimum is taken over points $\boldsymbol{\xi}$ at the surface,  and $\lvert\ \rvert$ denotes the Euclidien norm.
The unit normal  vector is : 
\begin{equation*}
\mathbf{n}=\nabla d(\mathbf{x}).
\label{normal vector} 
\end{equation*} 

In  dynamical problems, the main difficulty in formulating  boundary conditions  comes from  the fact that {\it one cannot assume that for all time} $t$ the unit normal vector to the surface is the gradient of 
the signed distance function.

Indeed,  if the material surface is moving, {\it i.e.} the surface position depends  on time $t$, the surface points of the continuum medium are also moving  and they will depend implicitly on $\mathbf{x}$.  Let $\varphi(t, \mathbf{x})=0$  be
  the position of the  material surface at time $t$. Its evolution is determined by the equation :
\begin{equation} 
\varphi_t+\mathbf{u}^T\nabla \varphi =0, 
\label{evolution_interface}
\end{equation}
where  $\mathbf{u}$ is the   velocity of particles at the surface.  Equation \eqref{evolution_interface} is the classical  kinematic condition for material  moving interfaces. Let us derive the equation for the norm of $\nabla\varphi$. Taking the gradient of Eq. \eqref{evolution_interface} and multiplying  by $\nabla\varphi$, one obtains : 
\begin{equation} 
\left(\lvert\nabla\varphi\rvert\right)_t+ \mathbf{n}^T\nabla\left(\mathbf{u}^T\nabla\varphi\right) =0,
\label{surface_area_evolution}
\end{equation}
where $\displaystyle\mathbf{n}=\frac{\nabla\varphi}{\lvert\nabla\varphi\rvert}$ is the unit normal vector to  surface $\varphi(t, \mathbf{x})=0$. It follows from Eq. \eqref{surface_area_evolution} that, even  if initially $\lvert\nabla\varphi\rvert=1$ ({\it i.e.} unit normal $\mathbf{n}$ is defined at $t=0$ as the gradient of the  signed distance function), this property is not conserved in time. 

The following definitions and notations are used in the paper.    
For  any vectors 
$\mathbf{a}, \mathbf{b}$, we write $\mathbf{a}^T\, \mathbf{b}$  for their {\it scalar product} (the
line vector is multiplied by the column vector),  and $\mathbf{a}\, \mathbf{b}^T$  for their {\it tensor product} (the column vector is multiplied by the line vector). The last product is usually denoted as  $\mathbf{a}\otimes\mathbf{b}$. 
The product of a second order tensor    $\mathbf{A}$   by a vector   $\mathbf{a}$
is denoted by   $\mathbf{A}\,  \mathbf{a} $. Notation   $\mathbf{b}^T  \mathbf{A}\, $ means the covector   $\mathbf{c}^T $ defined
by the rule $\mathbf{c}^T = ( \mathbf{A}^T\,  \mathbf{b})^T$. The identity tensor is denoted by $\mathbf{I}$. 
\\
The
divergence of  $\mathbf{A}$   is  
covector $ {\rm div} \mathbf{A}  $ such that, for any constant vector 
$\mathbf{h}$, one has  $$ \left({\rm div} \mathbf{A}\right)\, \mathbf{h}    =   {\rm
div }\, (\mathbf{A}\,\mathbf{h}),$$ i.e. the divergence of $\mathbf A$ is a row vector, in which each component is the divergence of the corresponding column of $\mathbf A$. It implies 
\begin{equation*}
{\rm div} \left(\mathbf{A} \mathbf{v} \right)   =   \left({\rm
	div }\mathbf{A}\right)\mathbf{v}+{\rm tr}\left(\mathbf{A}\frac{\partial \mathbf{v}}{\partial\mathbf{x}}\right),
\end{equation*}
for any  vector field $\mathbf{v}$. Here  ${\rm tr}$ is the trace operator.
If $f$ is a real scalar 
function of $\mathbf{x}$, $ \displaystyle \frac {\partial f}{\partial
\mathbf{x}}$  is the linear form (line vector) associated with the gradient of
$f$ (column vector) : $\displaystyle{ \frac {\partial f}{\partial
		\mathbf{x}}=(\nabla f)^T}$.

If $\mathbf{n}$ is the unit normal vector to a surface,  $\mathbf{P} =\mathbf{I}-\mathbf{n}\mathbf{n}^T$ is the projector on the surface with the classical   properties :  
\begin{equation*}
\mathbf{P}^2=\mathbf{P}, \quad \mathbf{P}^T=\mathbf{P}, \quad \mathbf{P}\mathbf{n}=\mathbf{0}, \quad \mathbf{n}^T\mathbf{P}=\mathbf{0}. 
\end{equation*} 
For any scalar field $f$, the vector field $\mathbf{v}$ and second order tensor field $\mathbf{A}$, the tangential surface gradient, tangential surface divergence, Beltrami--Laplace operator, and tangent tensors  are defined as :
\begin{equation*} 
{\mathbf v}_{\rm tg}=\mathbf{P}{\mathbf v}, \quad\mathbf{A}_{\rm tg}= \mathbf{P}\mathbf{A},\quad \nabla_{\rm tg}f=\mathbf{P}\nabla f, 
\end{equation*}
\begin{equation*}
{\rm div_{tg}} \mathbf{v}_{\rm tg}={\rm tr}\left(\mathbf{P}\frac{\partial \mathbf{v}_{\rm tg}}{\partial \mathbf{x}}\right), \quad \Delta_{\rm tg}f= {\rm div_{tg}}\left(\nabla_{\rm tg}f\right), \label{BelLap}
\end{equation*}
and for any constant vector $\mathbf  h$,
\begin{equation*}
{\rm div_{tg}} \left(\mathbf{A}_{\rm tg}{\mathbf h}\right)={\rm div_{tg}} \left(\mathbf{A}_{\rm tg}\right){\mathbf h}.
\end{equation*}
The following relations between surface   operators and classical operators applied to tangential  tensors in the sense of previous definitions are valid  : 
\begin{eqnarray}
&&{\rm div_{tg}} \mathbf{v}_{\rm tg}={\rm div} \mathbf{v}_{\rm tg}+{\mathbf n}^T\left(\frac{\partial {\mathbf n}}{\partial{\mathbf x}}\right)^T\mathbf{v}_{\rm tg},\label{1}\\
&&{\rm div_{tg}} \mathbf{v}_{\rm tg}={\mathbf n}^T{\rm rot}\left({\mathbf n}\times{\mathbf v}_{\rm tg}\right),\label{2}\\
&&{\rm div_{tg}} \mathbf{A}_{\rm tg}={\rm div} \mathbf{A}_{\rm tg}+{\mathbf n}^T\left(\frac{\partial {\mathbf n}}{\partial{\mathbf x}}\right)^T\mathbf{A}_{\rm tg},\label{3}\\
&&{\rm div_{tg}} \left(f\mathbf{v}_{\rm tg}\right)= f\,{\rm div}_{\rm tg} \mathbf{v}_{\rm tg}+\left(\nabla_{\rm tg}f\right)^T\mathbf{v}_{\rm tg}\label{4},\\
&&{\rm div_{tg}} \left(f\mathbf{A}_{\rm tg}\right)=f\,{\rm div}_{\rm tg} \mathbf{A}_{\rm tg}+\left(\nabla_{\rm tg}f\right)^T\mathbf{A}_{\rm tg}\label{5},
\end{eqnarray}
where ${\rm rot}$  denotes the curl operator.
The proof is straightforward. Indeed, since
\begin{equation*}
\frac{\partial ({\mathbf n}^T{\mathbf v}_{\rm tg})}{\partial {\mathbf x}}={\mathbf n}^T\left(\frac{\partial {\mathbf v}_{\rm tg}}{\partial{\mathbf x}}\right)+{\mathbf v}_{\rm tg}^T\left(\frac{\partial {\mathbf n}}{\partial{\mathbf x}}\right)=0,
\end{equation*}
one has 
\begin{equation*}
{\rm div_{tg}} \mathbf{v}_{\rm tg}={\rm tr}\left(\mathbf{P}\,\frac{\partial \mathbf{v}_{\rm tg}}{\partial \mathbf{x}}\right)={\rm div} {\mathbf{v}}_{\rm tg}-{\mathbf n}^T\left(\frac{\partial {\mathbf v}_{\rm tg}}{\partial{\mathbf x}}\right)\mathbf{n}
={\rm div} {\mathbf{v}}_{\rm tg}+{\mathbf n}^T\left(\frac{\partial \mathbf{n}}{\partial{\mathbf x}}\right)^T{\mathbf v}_{\rm tg},
\end{equation*}
which proves relation \eqref{1}. To prove relation \eqref{2}, one uses the following identity valid for any vector fields $ {\mathbf a}$ and ${\mathbf b}$ :
\begin{equation*}
{\rm rot}\left({\mathbf a}\times{\mathbf b}\right)={\mathbf a}\,{\rm div{\mathbf b}}-{\mathbf b}\,{\rm div{\mathbf a}}+\frac{\partial {\mathbf a}}{\partial{\mathbf x}}\,{\mathbf b}-\frac{\partial {\mathbf b}}{\partial{\mathbf x}}\,{\mathbf a}. 
\end{equation*}
We apply  this identity to the vectors ${\mathbf a}={\mathbf n}$ and ${\mathbf b}={\mathbf v}_{\rm tg}$. Multiplying  on   left   by ${\mathbf n}^T$, one obtains relation \eqref{2}. Relations \eqref{3}, \eqref{4}, \eqref{5} are direct consequences of relation \eqref{1}. 
\section{Curvature tensor}
The unit normal vector being prolonged in the surface vicinity, we can directly obtain the expression of its derivative : 
\begin{equation*}
\frac{\partial \mathbf{n}}{\partial \mathbf{x}}=\mathbf{P}\; \frac{\varphi^{''}}{\vert\nabla\varphi\vert},
\end{equation*}
where $\varphi^{''}$ is the Hessian matrix of $\varphi$ with respect to $\mathbf{x}$. 
One obviously has 
\begin{equation*}
 \mathbf{n}^{T}\frac{\partial \mathbf{n}}{\partial\mathbf{x}}=\mathbf{0}.
\end{equation*}
However, since in dynamics $\mathbf{n}$ is not the gradient of the signed distance function, we cannot have the property : 
\begin{equation}
\frac{\partial \mathbf{n}}{\partial\mathbf{x}} \mathbf{n} =\mathbf{0}.
\label{hypothesis}
\end{equation}
The curvature tensor is defined as : 
\begin{equation*}
\mathbf{R} =
-  \mathbf{P}\; \frac{\varphi^{''}}{\vert\nabla\varphi\vert}\,\mathbf{P}= -\frac{\partial \mathbf{n}}{\partial \mathbf{x}}\,\mathbf{P}. 
\end{equation*}
Hence, in dynamics 
\begin{equation*}
 \mathbf{R}\neq -\frac{\partial \mathbf{n}}{\partial \mathbf{x}}.
\end{equation*}
Let us note that the derivation of the shape equation and boundary conditions in statics always uses property (\ref{hypothesis}) and  the curvature tensor coming from the definition of the signed distance function. In dynamics, we cannot use these properties and new tools should be developed.

Tensor $\mathbf{R}$ is symmetric and has zero as an eigenvalue :  
\begin{equation*}
\mathbf{R}=\mathbf{R}^T, \quad \mathbf{R}\mathbf{n}=\mathbf{0}.
\end{equation*}
In the  eigenbasis,   tensor   $\mathbf{R}$ is diagonal :
\begin{equation*}
\mathbf{R}=\left(
\begin{array}{ccc}c_1 & 0 & 0\\
0 & c_2 & 0 \\
0 & 0 & 0
 \end{array}
 \right),
\end{equation*}
where $c_1, c_2$ are the principal curvatures. The two invariants of   curvature tensor $\mathbf{R}$ are : 
\begin{equation*}
H=c_1+c_2, \quad K=c_1c_2. 
\end{equation*} 
Invariant $H$ is the double mean curvature,  and invariant $K$ is the Gaussian curvature. 
They can also be expressed in the form :
\begin{equation*}
H={\rm tr }\, \mathbf{R} =-{\rm tr }\left(\frac{\partial\mathbf{n}}{\partial \mathbf{x}}\right),
\end{equation*}
\begin{equation*}
2K=\left({\rm tr }\, \mathbf{R}\right) ^2-{\rm tr }\left(\mathbf{R}^2\right)=\left[{\rm tr }\left(\frac{\partial \mathbf{n}}{\partial \mathbf{x}}\right)\right]^2-{\rm tr }\left[\left(\frac{\partial \mathbf{n}}{\partial \mathbf{x}}\right)^2\right].
\label{H_K}
\end{equation*}
\begin{lemma}
\label{Lemma 1}
The following identities are valid : 
\begin{eqnarray}
&&{\rm div_{tg}} \,	\mathbf{P} = H\, 	\mathbf{n}^T,\notag\\
&&{\rm div_{tg}} \,	\mathbf{R} = {\rm \nabla_{tg}}^T\,H+\left(H^2-2\,K\right)\mathbf{n}^T,\notag\\
&&	\mathbf{R}^2 = H\, \mathbf{R}- K\, \mathbf{P}\notag.
\end{eqnarray}
\end{lemma}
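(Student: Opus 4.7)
The plan is to prove the three identities in the order (c), (a), (b), since (c) is purely algebraic and is used in the proof of (b).

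For the third identity, $\mathbf{R}^{2}=H\,\mathbf{R}-K\,\mathbf{P}$, I would work in the orthonormal eigenbasis of $\mathbf{R}$. There $\mathbf{R}=\mathrm{diag}(c_{1},c_{2},0)$ and $\mathbf{P}=\mathrm{diag}(1,1,0)$, so the identity reduces to the two scalar relations $c_{i}^{2}=(c_{1}+c_{2})\,c_{i}-c_{1}c_{2}$ for $i=1,2$, together with $0=0$ in the normal direction. Equivalently, this is the Cayley--Hamilton theorem applied to the restriction of $\mathbf{R}$ to the tangent plane, whose characteristic polynomial is $\lambda^{2}-H\lambda+K$; the extension to the whole space is legitimate because $\mathbf{R}\mathbf{n}=\mathbf{P}\mathbf{n}=\mathbf{0}$.

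For the first identity, I would write $\mathbf{P}=\mathbf{I}-\mathbf{n}\mathbf{n}^{T}$ and apply relation (3) of the preamble to the tangent tensor $\mathbf{P}$:
\begin{equation*}
\mathrm{div}_{\mathrm{tg}}\,\mathbf{P}=\mathrm{div}\,\mathbf{P}+\mathbf{n}^{T}\!\left(\frac{\partial \mathbf{n}}{\partial \mathbf{x}}\right)^{\!T}\!\mathbf{P}.
\end{equation*}
A componentwise Leibniz computation gives $\mathrm{div}(\mathbf{n}\mathbf{n}^{T})=(\mathrm{div}\,\mathbf{n})\,\mathbf{n}^{T}+\mathbf{n}^{T}(\partial\mathbf{n}/\partial\mathbf{x})^{T}$. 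Combining this with $\mathrm{div}\,\mathbf{n}=\mathrm{tr}(\partial\mathbf{n}/\partial\mathbf{x})=-H$ and the identity $(\partial\mathbf{n}/\partial\mathbf{x})^{T}\mathbf{n}=\mathbf{0}$ (obtained by differentiating $|\mathbf{n}|^{2}=1$), the unwanted $\mathbf{n}^{T}(\partial\mathbf{n}/\partial\mathbf{x})^{T}$ terms cancel against the $\mathbf{P}$-projection and only $H\,\mathbf{n}^{T}$ survives.

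The second identity is the main obstacle because $\mathbf{R}$ is no longer a pure projector and $\mathbf{n}$ is not assumed to be a gradient. I would again invoke relation (3) to get $\mathrm{div}_{\mathrm{tg}}\,\mathbf{R}=\mathrm{div}\,\mathbf{R}+\mathbf{n}^{T}(\partial\mathbf{n}/\partial\mathbf{x})^{T}\mathbf{R}$, then substitute $\mathbf{R}=-(\partial\mathbf{n}/\partial\mathbf{x})\mathbf{P}$ and expand. The expansion naturally splits into a tangential row covector and a multiple of $\mathbf{n}^{T}$. The tangential part should collapse to $\nabla_{\mathrm{tg}}^{T}H$ via a Codazzi-type symmetry: on the tangent plane $\mathbf{R}$ coincides with $\mathbf{P}\varphi''\mathbf{P}/|\nabla\varphi|$, and the symmetry of $\varphi''$ lets one commute a tangential derivative past a contraction with $\mathbf{P}$, converting $\mathrm{div}_{\mathrm{tg}}\mathbf{R}$ in tangential directions into $\nabla_{\mathrm{tg}}\,\mathrm{tr}\,\mathbf{R}=\nabla_{\mathrm{tg}}H$. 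The normal component is extracted by contracting the resulting expression with $\mathbf{n}$: using identity (c) proved above, the surviving terms combine into $\mathrm{tr}(\mathbf{R}^{2})\,\mathbf{n}^{T}$, and the algebraic identity $\mathrm{tr}(\mathbf{R}^{2})=H^{2}-2K$ recalled in the excerpt gives the claimed coefficient. The bookkeeping that makes all spurious non-tangential terms collapse into exactly $(H^{2}-2K)\mathbf{n}^{T}$ is the delicate part of the argument.
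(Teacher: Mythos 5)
Your proposal is correct and follows essentially the same route as the paper: the eigenbasis/Cayley--Hamilton argument for $\mathbf{R}^{2}=H\,\mathbf{R}-K\,\mathbf{P}$, relation \eqref{3} plus the Leibniz expansion of ${\rm div}\,(\mathbf{n}\mathbf{n}^{T})$ and the cancellation via $\mathbf{n}^{T}(\partial\mathbf{n}/\partial\mathbf{x})=\mathbf{0}$ for the first identity, and relation \eqref{3} applied to $\mathbf{R}=-(\partial\mathbf{n}/\partial\mathbf{x})\,\mathbf{P}$ for the second. The ``delicate bookkeeping'' you defer is resolved in the paper by exactly the two facts you gesture at: the symmetry of mixed partials in the form ${\rm div}\,(\partial\mathbf{n}/\partial\mathbf{x})=\partial({\rm div}\,\mathbf{n})/\partial\mathbf{x}$, which yields the tangential part $(\partial H/\partial\mathbf{x})\,\mathbf{P}=\nabla_{\rm tg}^{T}H$, and the identity $\mathbf{n}^{T}\left(\partial\mathbf{n}/\partial\mathbf{x}\right)^{T}\mathbf{R}=-\,\mathbf{n}^{T}\left(\left(\partial\mathbf{n}/\partial\mathbf{x}\right)^{2}\right)^{T}$ (valid because $\left(\partial\mathbf{n}/\partial\mathbf{x}\right)^{T}\mathbf{n}=\mathbf{0}$ even though $\left(\partial\mathbf{n}/\partial\mathbf{x}\right)\mathbf{n}\neq\mathbf{0}$ in dynamics), which cancels the off-tangent residue against the correction term of relation \eqref{3} and leaves ${\rm tr}\left(\left(\partial\mathbf{n}/\partial\mathbf{x}\right)^{2}\right)\mathbf{n}^{T}={\rm tr}\left(\mathbf{R}^{2}\right)\mathbf{n}^{T}=\left(H^{2}-2K\right)\mathbf{n}^{T}$.
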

\begin{proof}:  First, let us remark that $\mathbf{P}=\mathbf{P}_{\rm tg}$, $\mathbf{R}=\mathbf{R}_{\rm tg}$. One can apply Eq. \eqref{3} to obtain : 
\begin{eqnarray*}
 {\rm div_{tg}} \,	\mathbf{P}& =&-  {\rm div} \,	\left(\mathbf{n\,n}^T\right) +  {\mathbf n}^T\left(\frac{\partial {\mathbf n}}{\partial{\mathbf x}}\right)^T\mathbf{P}\\ &=& -  \left({\rm div} \,	\mathbf{n}\right)\mathbf{n} ^T -  {\mathbf n}^T\left(\frac{\partial {\mathbf n}}{\partial{\mathbf x}}\right)^T+  {\mathbf n}^T\left(\frac{\partial {\mathbf n}}{\partial{\mathbf x}}\right)^T\left(\mathbf{I-\mathbf{n\,n}^T}\right)\\
 &=& -  \left({\rm div} \,	\mathbf{n}\right)\mathbf{n} ^T,
	\end{eqnarray*}
 which proves the first relation. The proof of the second relation is as follows : 
\begin{eqnarray*}
	{\rm div} \,	\mathbf{R}& =&-{\rm div} \,\left(\frac{\partial {\mathbf n}}{\partial{\mathbf x}}\right)+ {\rm div} \,	\left(\frac{\partial {\mathbf n}}{\partial{\mathbf x}}\,\mathbf{n\,n}^T\right) \\ 
		&=&  -\frac{\partial({\rm div} \, {\mathbf n})}{\partial{\mathbf x}}+ {\rm div} \,\left(\frac{\partial {\mathbf n}}{\partial{\mathbf x}}\mathbf{n}\right)\,\mathbf{n}^T+ \mathbf{n}^T\left(\left(\frac{\partial {\mathbf n}}{\partial{\mathbf x}}\right)^{2}\right)^T\\
	&=&  -\frac{\partial{(\rm div} \, {\mathbf n})}{\partial{\mathbf x}}+ {\rm div} \,\left(\frac{\partial {\mathbf n}}{\partial{\mathbf x}}\right)\,\mathbf{n\,n}^T+{\rm tr}\left(\left(\frac{\partial {\mathbf n}}{\partial{\mathbf x}}\right)^2\right)\mathbf{n}^T+ \mathbf{n}^T\left(\left(\frac{\partial {\mathbf n}}{\partial{\mathbf x}}\right)^{2}\right)^T\\
	&=& \frac{\partial H}{\partial\mathbf {x}}\,\mathbf {P}+ {\rm tr}\left(\left(\frac{\partial {\mathbf n}}{\partial{\mathbf x}}\right)^2\right)\mathbf{n}^T- \mathbf{n}^T\left(\frac{\partial {\mathbf n}}{\partial{\mathbf x}}\right)^T\mathbf{R}.
\end{eqnarray*}
Consequently,
\begin{equation*}
{\rm div_{tg}} \,	\mathbf{R}= \frac{\partial H}{\partial\mathbf {x}}\,\mathbf {P}+ {\rm tr}\left(\left(\frac{\partial {\mathbf n}}{\partial{\mathbf x}}\right)^2\right)\mathbf{n}^T.
\end{equation*}
Using  $\displaystyle  {\rm tr}\left(\left(\frac{\partial {\mathbf n}}{\partial{\mathbf x}}\right)^2\right)= {\rm tr}\left({\mathbf R}^2\right) = H^2-2\,K$, we obtain the second relation of the lemma.  
\\
Now, the curvature tensor satisfies the Cayley-Hamilton theorem : 
\begin{equation*}
{\mathbf R}^3-H \, {\mathbf R}^2+ K\, {\mathbf R} =0.
\end{equation*}
The minimal polynomial is :
\begin{equation*}
{\mathbf R}^2-H \, {\mathbf R} + K\, {\mathbf P} =0,\label{minimal}
\end{equation*}
which proves the third relation.
	\end{proof}
\section{Virtual motion}
Let a one-parameter family of 
\emph{virtual motions} 
\begin{equation*}
\mathbf{x}=\mathbf{\Phi}\left(t, \mathbf{X},  \lambda \right) 
\label{vitual motion}
\end{equation*}
with scalar $\lambda \in O$, where $O$ is an open real interval containing zero and
such that $\mathbf{\Phi}\left( t, \mathbf{X}, 0\right) =\boldsymbol{\phi }%
\left(t,  \mathbf{X}\right) $ (the  motion of the continuous medium is obtained for $\lambda =0$). The {\it virtual displacement} of particle $\mathbf X$  is defined as   \cite{Gavrilyuk,Serrin} :  
\begin{equation*}
\delta\mathbf{x}(t, \mathbf{X})=\frac{\partial {\mathbf{\Phi}}(t,\mathbf{X}, \lambda)}{\partial \lambda}\vert_{\lambda=0}. 
\label{vitual displacement}
\end{equation*}
In the following, symbol $\delta$ means the derivative with respect to $\lambda$  at fixed Lagrangian coordinates $\mathbf{X}$ and $t$, for $\lambda=0$. 
We will also denote by $\boldsymbol{\zeta}(t,{\mathbf x})$ the virtual displacement expressed as a function of Eulerian coordinates :
\begin{equation*}
\boldsymbol{\zeta}(t,\mathbf x)=\boldsymbol{\zeta}\left(t,{\boldsymbol \phi}(t, \mathbf X)\right)=\delta{\mathbf x}\left(t, \mathbf X\right). 
\end{equation*}
\section{Variational  tools}
 We assume that   $D_t$  has a smooth boundary $S_t$ with  edge $C_t$. We respectively denote $D_0$,  $S_0$ and $C_0$ the images of $D_t$, $S_t$ and $C_t$ in the reference space (of Lagrangian coordinates). The unit vector $\mathbf{n}$ and its image $\mathbf{n}_0$ are the oriented normal vectors to $S_t$ and
$S_0$; the vector $\mathbf{t}$ is  the oriented unit tangent vector to $C_t$ and $\mathbf{n}^{\prime}= \mathbf{t}\times \mathbf{n}$ is the unit binormal vector (see Fig. 1).
$\mathbf{F} =\partial {{\boldsymbol{\phi}}(t, \mathbf X)}/\partial \mathbf{X}\equiv \partial \mathbf{x}/\partial \mathbf{X}$ is the deformation gradient. For the sake  of simplicity, we will use  the same notations for quantities as  $\mathbf{F}$, $\mathbf n$, etc. both  in Eulerian and Lagrangian coordinates.  
\begin{lemma}
\label{basic_lemma}
We have the  relations :
\begin{eqnarray}
\delta \det \mathbf{F} &=& \det \mathbf{F}\, \rm{div}\,\boldsymbol{\zeta}\,,  \label{Jacobi} \\
\delta \mathbf{n} &=& -\mathbf{P}\,\left(\frac{\partial \boldsymbol{\zeta}}{\partial \mathbf{x}}\right)^T\mathbf{n} \label{variation_n}\,, \\
\delta \left(\mathbf{F}^{-1}\mathbf{n}\right) &=&-\mathbf{F}^{-1}\ \frac{\partial \boldsymbol{\zeta }}{\partial \mathbf{x}}\,\mathbf{n}+\mathbf{F}^{-1} \delta \mathbf{n}\,,  \label{InverseF}\\
\delta\left(\frac{\partial \mathbf{n}}{\partial \mathbf{x}}\right)&=&\frac{\partial \delta\mathbf{n}}{\partial \mathbf{x}}-{\frac{\partial \mathbf{n}}{\partial \mathbf{x}}\frac{\partial \boldsymbol{\zeta}}{\partial \mathbf{x}}}\,.\label{variation_gradientn}
\end{eqnarray}
\end{lemma}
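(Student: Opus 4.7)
The plan is to build up the four identities in increasing order of subtlety, relying on two workhorse facts: first, that $\delta$ commutes with the Lagrangian partial derivative $\partial/\partial\mathbf{X}$ (since both are taken at fixed $\mathbf X$), which combined with the chain rule gives $\delta\mathbf{F}=\bigl(\partial\boldsymbol{\zeta}/\partial\mathbf{x}\bigr)\mathbf{F}$; and second, Nanson's formula $\mathbf{n}\,ds=\det\mathbf{F}\,\mathbf{F}^{-T}\mathbf{n}_0\,ds_0$, in which the right-hand side depends on $\lambda$ only through $\mathbf F$.

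For \eqref{Jacobi} I would apply the Jacobi formula $\delta\det\mathbf{F}=\det\mathbf{F}\,{\rm tr}(\mathbf{F}^{-1}\delta\mathbf{F})$ and substitute the expression for $\delta\mathbf{F}$ above: the two copies of $\mathbf F$ cancel under the trace, leaving ${\rm tr}(\partial\boldsymbol{\zeta}/\partial\mathbf{x})={\rm div}\,\boldsymbol{\zeta}$. For \eqref{InverseF} the identity is just the product rule applied to $\mathbf F^{-1}\mathbf n$, combined with $\delta(\mathbf{F}^{-1})=-\mathbf{F}^{-1}(\delta\mathbf{F})\mathbf{F}^{-1}=-\mathbf{F}^{-1}(\partial\boldsymbol{\zeta}/\partial\mathbf{x})$, obtained from $\delta(\mathbf{F}\mathbf{F}^{-1})=0$. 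For \eqref{variation_gradientn} I would exploit the commutation $\delta(\partial\mathbf{n}/\partial\mathbf{X})=\partial\delta\mathbf{n}/\partial\mathbf{X}$: writing both sides in Eulerian form via $\partial/\partial\mathbf{X}=(\partial/\partial\mathbf{x})\mathbf{F}$, expanding using the product rule, and right-multiplying by $\mathbf{F}^{-1}$ produces the claimed formula, the extra term being the $\delta\mathbf F$ contribution pushed back.

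The real work is \eqref{variation_n}, since $\mathbf n$ is only implicitly defined by the condition of being a unit normal to the transported surface. Here I plan to vary Nanson's formula: $\delta(\mathbf{n}\,ds)=\delta(\det\mathbf F)\,\mathbf F^{-T}\mathbf n_0\,ds_0+\det\mathbf F\,\delta(\mathbf F^{-T})\,\mathbf n_0\,ds_0$. Using \eqref{Jacobi} for the first term and $\delta(\mathbf F^{-T})=-(\partial\boldsymbol{\zeta}/\partial\mathbf{x})^T\mathbf F^{-T}$ (the transpose of the computation used for \eqref{InverseF}) for the second, one recovers
\begin{equation*}
\delta(\mathbf{n}\,ds)=\Bigl({\rm div}\,\boldsymbol{\zeta}\,\mathbf{n}-\bigl(\partial\boldsymbol{\zeta}/\partial\mathbf{x}\bigr)^T\mathbf{n}\Bigr)ds.
\end{equation*}
To isolate $\delta\mathbf n$ I would then split $\delta(\mathbf{n}\,ds)=(\delta\mathbf{n})\,ds+\mathbf{n}\,\delta(ds)$, determine $\delta(ds)$ by taking the scalar product of the boxed equation with $\mathbf n$ (using $\mathbf{n}^T\delta\mathbf{n}=0$, a consequence of $\mathbf{n}^T\mathbf{n}=1$), and subtract. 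The leftover combination $(\mathbf{I}-\mathbf{n}\mathbf{n}^T)(\partial\boldsymbol{\zeta}/\partial\mathbf{x})^T\mathbf n=\mathbf{P}(\partial\boldsymbol{\zeta}/\partial\mathbf{x})^T\mathbf n$ gives exactly \eqref{variation_n}.

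The main obstacle is the bookkeeping for \eqref{variation_n}: one must resist the temptation to differentiate $\mathbf n=\nabla\varphi/|\nabla\varphi|$ directly (which would require tracking $\delta\varphi$ and would obscure the role of the tangential projector), and instead argue surface-intrinsically via Nanson. Once the identification of $\delta(ds)$ is handled cleanly through the $\mathbf n$-projection, the remaining algebra collapses and the projector $\mathbf P$ appears automatically as the geometric expression of $|\mathbf n|=1$.
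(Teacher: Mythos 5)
Your proposal is correct, and for three of the four identities it coincides with the paper's own argument: the Jacobi formula plus $\delta\mathbf{F}=(\partial\boldsymbol{\zeta}/\partial\mathbf{x})\,\mathbf{F}$ for the determinant, the product rule plus $\delta\mathbf{F}^{-1}=-\mathbf{F}^{-1}\partial\boldsymbol{\zeta}/\partial\mathbf{x}$ for $\delta(\mathbf{F}^{-1}\mathbf{n})$, and the commutation of $\delta$ with $\partial/\partial\mathbf{X}$ for $\delta(\partial\mathbf{n}/\partial\mathbf{x})$. Where you genuinely diverge is the formula for $\delta\mathbf{n}$. The paper exploits that $S_t$ is a material surface, so $\varphi(t,\mathbf{x})=\varphi_0(\mathbf{X})$ gives $\delta\varphi=0$; it then varies $\nabla\varphi=(\partial\varphi/\partial\mathbf{X})\mathbf{F}^{-1}$ to get $\delta\nabla\varphi=-(\partial\boldsymbol{\zeta}/\partial\mathbf{x})^T\nabla\varphi$ and normalizes. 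You instead vary Nanson's relation $\mathbf{n}\,ds=\det\mathbf{F}\,\mathbf{F}^{-T}\mathbf{n}_0\,ds_0$, whose right-hand side depends on $\lambda$ only through $\mathbf{F}$, and extract $\delta\mathbf{n}$ by projecting off the $\delta(ds)$ part using $\mathbf{n}^T\delta\mathbf{n}=0$; the algebra you sketch does close and yields $-\mathbf{P}(\partial\boldsymbol{\zeta}/\partial\mathbf{x})^T\mathbf{n}$. Your route has the side benefit of producing $\delta(ds)$ for free (which is essentially the content of the paper's separate Lemma on the variation of the surface energy), while the paper's route is more elementary in that it needs only the level-set description already set up in the introduction. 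One small remark: your stated reason for avoiding the direct differentiation of $\mathbf{n}=\nabla\varphi/\lvert\nabla\varphi\rvert$ --- that one would have to ``track $\delta\varphi$'' --- is not really an obstacle, since $\delta\varphi=0$ identically for a material surface; that direct computation is precisely what the paper does, and it is no longer than yours.
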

\emph{Proof of} Rel. (\ref{Jacobi}): \newline
The   Jacobi formula for determinant  is  :
\begin{equation*}
\delta (\det \mathbf{F})=\det \mathbf{F}\,{\rm tr}\left( \mathbf{F}^{-1}\delta \mathbf{F}\right).
\end{equation*}
Also, 
\begin{equation*}
\delta \mathbf{F}=\delta \left( \frac{\partial \mathbf{x}}{\partial \mathbf{X}}\right) =  
\frac{\partial \delta{\mathbf x}}{\partial \mathbf{X}}.
\end{equation*}%
Then
\begin{equation*}
{\rm tr}\left( \mathbf{F}^{-1}\delta \mathbf{F}\right) ={\rm tr}\left( \frac{\partial
\mathbf{X}}{\partial \mathbf{x}}\,\frac{\partial \delta{\mathbf x}}{\partial
\mathbf{X}}\right) ={\rm tr}\left( \frac{\partial \delta{\mathbf x}}{\partial \mathbf{X}}\,\frac{\partial \mathbf{X}}{\partial \mathbf{x}}\right)
={\rm tr}\left( \frac{\partial \boldsymbol{\zeta}}{\partial \mathbf{x}}%
\right) =\mathrm{div}\,{\boldsymbol{\zeta}}.
\end{equation*}
\emph{Proof of} Rel. (\ref{variation_n}): \newline
Surface $\varphi(t, \mathbf{x})=0$ is a material surface. It can be represented in the Lagrangian coordinates as $\varphi(t, \mathbf{x}) =\varphi_0 (\mathbf{X})$ which implies that $\delta \varphi=0$. Also, 
\begin{equation*}
\delta \left( \frac{\partial \varphi}{\partial \mathbf{x}}\right)=\delta \left( \frac{\partial \varphi}{\partial \mathbf{X}}\mathbf{F}^{-1}\right)=\frac{\partial \delta\varphi}{\partial \mathbf{x}}-\frac{\partial \varphi}{\partial \mathbf{x}}\frac{\partial \boldsymbol{\zeta}}{\partial \mathbf{x}}=-\frac{\partial \varphi}{\partial \mathbf{x}}\frac{\partial \boldsymbol{\zeta}}{\partial \mathbf{x}}.
\end{equation*}
Here  we used the following expression for the variation of $\mathbf{F}^{-1}$ coming from the relation $\mathbf{F}^{-1}\mathbf{F}=\mathbf{I}$ :
\begin{equation*}
\delta \mathbf{F}^{-1}=-\mathbf{F}^{-1}\frac{\partial \boldsymbol{\zeta}}{\partial \mathbf{x}}.
\end{equation*}
One also has  :
\begin{equation*}
\delta \vert\nabla\varphi\vert=\frac{\left(\nabla\varphi\right)^T\delta\nabla\varphi}{\vert\nabla\varphi\vert}.
\end{equation*}
Finally, taking the variation of\ \, 
$\displaystyle 
\mathbf{n}=\frac{\nabla\varphi}{\vert\nabla\varphi\vert},
$
one can obtain 
\begin{equation*}
\delta \mathbf{n}=\left(\mathbf{n}^T\mathbf{n}-\mathbf{I}\right)\left(\frac{\partial \boldsymbol{\zeta}}{\partial \mathbf{x}}\right)^T\mathbf{n}=-\mathbf{P}\left(\frac{\partial \boldsymbol{\zeta}}{\partial \mathbf{x}}\right)^T\mathbf{n}.
\end{equation*}
\emph{Proof of} Rel. (\ref{InverseF}):  \newline
\begin{equation*}
\delta \left( \mathbf{F}^{-1}\mathbf{n}\right) =\delta \left( \mathbf{F}^{-1}\right) \mathbf{n}
+\mathbf{F}^{-1}\delta \mathbf{n}=-\mathbf{F}^{-1}\ \frac{\partial \boldsymbol{\zeta }}{\partial \mathbf{x}}\,\mathbf{n}+\mathbf{F}^{-1} \delta \mathbf{n}.
\end{equation*}
\emph{Proof of} Rel. (\ref{variation_gradientn}):  
\begin{equation*}
\delta\left(\frac{\partial \mathbf{n}}{\partial \mathbf{x}}\right)=\delta\left(\frac{\partial \mathbf{n}}{\partial \mathbf{X}}\mathbf{F}^{-1}\right)=\frac{\partial \delta\mathbf{n}}{\partial \mathbf{X}}\mathbf{F}^{-1}+\frac{\partial \mathbf{n}}{\partial \mathbf{X}}\delta\mathbf{F}^{-1}=\frac{\partial \delta\mathbf{n}}{\partial \mathbf{x}}-{\frac{\partial \mathbf{n}}{\partial \mathbf{x}}\frac{\partial \boldsymbol{\zeta}}{\partial \mathbf{x}}}.
\end{equation*}
\\
We denote by $\sigma$ the energy per unit area of surface $S_t$. The variation of $\sigma$ is $\delta\sigma$. This variation depends on the  physical problem through the dependence of $\sigma$ on geometrical and thermodynamical parameters. For now, we do not need to know this variation in explicit form,  the variation will be given further. The next lemma gives  the variation of the surface potential energy \cite{Gouin_2014,Gouin_2014_vesicle}.  
\begin{lemma}
\label{variation_surface_energy}
 Let us consider a material surface  $S_t$ of  boundary edge $C_t$.  The variation of surface energy
 \begin{equation*}
 E=\protect\iint_{S_t} \sigma\,ds
 \end{equation*}
  is   
\begin{equation*}
\delta E=\iint_{S_t}\left[\delta \sigma-\left(  \boldsymbol{\nabla}_{\rm tg}^{T}\sigma
 +\sigma H \,\mathbf{n}^{T}\right)\boldsymbol{\zeta}\right] ds + \int_{C_t} \sigma\,
\mathbf{n^{\prime}}^T \boldsymbol{\zeta}~dl,  \label{varsurfO}
\end{equation*}
where $ds$, $dl$ are the surface and  line measures, respectively\,\footnote{It is interesting to remark that the combination $\displaystyle{\hat\delta \sigma=\delta \sigma- \left(\boldsymbol{\nabla}_{\rm tg}^{T}\sigma\right) \boldsymbol{\zeta}  }$ is the variation of $\sigma$ at fixed Eulerian coordinates. Indeed,  since the symbol $\delta$ means the variation at fixed Lagrangian coordinates, and $\hat\delta$ is the variation at fixed Eulerian coordinates, this formula is a natural general relation between two types of variations (cf. \cite{SGHG,Gavrilyuk}).}.
\end{lemma}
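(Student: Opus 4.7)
My plan is to pull the surface integral back to the Lagrangian configuration $S_0$, which is independent of $\lambda$ so that $\delta$ commutes with the integral; compute the variation of the surface Jacobian $J_s$ defined by $ds = J_s\, ds_0$; and then integrate by parts on $S_t$ to produce the line integral on $C_t$.

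For the first subroutine I would start from Nanson's formula $J_s\mathbf{n} = (\det\mathbf{F})\,\mathbf{F}^{-T}\mathbf{n}_0$, apply $\delta$ using relation (\ref{Jacobi}) of Lemma \ref{basic_lemma} together with $\delta\mathbf{F}^{-T} = -(\partial\boldsymbol{\zeta}/\partial\mathbf{x})^T\mathbf{F}^{-T}$, and obtain
\begin{equation*}
(\delta J_s)\mathbf{n} + J_s\,\delta\mathbf{n} \;=\; J_s\Bigl[(\mathrm{div}\,\boldsymbol{\zeta})\,\mathbf{n} - \bigl(\tfrac{\partial\boldsymbol{\zeta}}{\partial\mathbf{x}}\bigr)^{\!T}\!\mathbf{n}\Bigr].
\end{equation*}
Contracting on the left by $\mathbf{n}^T$, and using that $\delta\mathbf{n}$ is tangential by (\ref{variation_n}) so $\mathbf{n}^T\delta\mathbf{n} = 0$, the normal component collapses to
\begin{equation*}
\delta J_s \;=\; J_s\bigl[\mathrm{div}\,\boldsymbol{\zeta} - \mathbf{n}^T(\partial\boldsymbol{\zeta}/\partial\mathbf{x})\mathbf{n}\bigr] \;=\; J_s\,\mathrm{tr}\bigl(\mathbf{P}\,\partial\boldsymbol{\zeta}/\partial\mathbf{x}\bigr).
\end{equation*}
Pushing back to Eulerian coordinates then yields $\delta E = \iint_{S_t}\bigl[\delta\sigma + \sigma\,\mathrm{tr}(\mathbf{P}\,\partial\boldsymbol{\zeta}/\partial\mathbf{x})\bigr]\,ds$.

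For the second subroutine I would decompose $\boldsymbol{\zeta} = \boldsymbol{\zeta}_{\rm tg} + (\mathbf{n}^T\boldsymbol{\zeta})\mathbf{n}$. Using $\mathbf{P}\mathbf{n} = \mathbf{0}$ and the easy identity $\mathrm{tr}(\mathbf{P}\,\partial\mathbf{n}/\partial\mathbf{x}) = -H$ (which follows from the definition of $H$ together with $\mathbf{n}^T\partial\mathbf{n}/\partial\mathbf{x} = \mathbf{0}$), the cross terms vanish and $\mathrm{tr}(\mathbf{P}\,\partial\boldsymbol{\zeta}/\partial\mathbf{x}) = \mathrm{div}_{\rm tg}\,\boldsymbol{\zeta}_{\rm tg} - H\,\mathbf{n}^T\boldsymbol{\zeta}$. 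Multiplying by $\sigma$ and rewriting $\sigma\,\mathrm{div}_{\rm tg}\,\boldsymbol{\zeta}_{\rm tg} = \mathrm{div}_{\rm tg}(\sigma\boldsymbol{\zeta}_{\rm tg}) - (\nabla_{\rm tg}^T\sigma)\,\boldsymbol{\zeta}$ via relation (\ref{4}) (noting $(\nabla_{\rm tg}^T\sigma)\,\boldsymbol{\zeta}_{\rm tg} = (\nabla_{\rm tg}^T\sigma)\,\boldsymbol{\zeta}$ since $\nabla_{\rm tg}\sigma$ is tangential), the surface divergence theorem applied to the tangential vector field $\sigma\boldsymbol{\zeta}_{\rm tg}$ produces the boundary contribution $\int_{C_t}\sigma\,\mathbf{n'}^T\boldsymbol{\zeta}\,dl$ (using $\mathbf{n'}\perp\mathbf{n}$). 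Collecting the pieces reproduces the stated identity.

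The main obstacle is the first subroutine: because in dynamics $(\partial\mathbf{n}/\partial\mathbf{x})\mathbf{n}\neq\mathbf{0}$, one cannot contract Nanson's identity against $\mathbf{n}$ on the right. What saves the argument is that the left-sided identities $\mathbf{n}^T\partial\mathbf{n}/\partial\mathbf{x} = \mathbf{0}$ and $\mathbf{n}^T\delta\mathbf{n} = 0$ remain valid (both being consequences of $|\mathbf{n}|=1$); once the \emph{normal} projection of Nanson's identity reduces $\delta J_s/J_s$ to $\mathrm{tr}(\mathbf{P}\,\partial\boldsymbol{\zeta}/\partial\mathbf{x})$, the remainder is a routine Stokes-type calculation built on Lemma 1 and relations (\ref{1})--(\ref{5}).
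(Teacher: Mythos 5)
Your proof is correct and follows essentially the same route as the paper: both arguments pull the integral back to $S_0$, use Lemma~\ref{basic_lemma} (the Jacobi formula together with $\delta\mathbf{F}^{-1}=-\mathbf{F}^{-1}\,\partial\boldsymbol{\zeta}/\partial\mathbf{x}$) to reduce the variation of the surface measure to $\sigma\,\mathrm{tr}\bigl(\mathbf{P}\,\partial\boldsymbol{\zeta}/\partial\mathbf{x}\bigr)\,ds$, and then integrate by parts on $S_t$ --- your Nanson's-formula identity projected on $\mathbf{n}$ and your surface divergence theorem applied to the tangential field $\sigma\boldsymbol{\zeta}_{\rm tg}$ are exactly the steps the paper carries out via the scalar triple product $\det(\mathbf{F}^{-1}\mathbf{n},d_{10}\mathbf{X},d_{20}\mathbf{X})$ and the curl identity \eqref{A0} combined with relation \eqref{2} and Stokes' theorem. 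You also correctly rely only on the left identity $\mathbf{n}^T\,\partial\mathbf{n}/\partial\mathbf{x}=\mathbf{0}$ rather than $(\partial\mathbf{n}/\partial\mathbf{x})\,\mathbf{n}=\mathbf{0}$, which is precisely the point that keeps the computation valid in dynamics.
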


\begin{proof}:
We suppose that the unit normal vector field is locally extended
in the vicinity of $S_t$. For any vector field ${\mathbf{w}}$ one has :
\begin{equation*}
\mathrm{rot} ({\mathbf{n}} \times {\mathbf{w}}) ={\mathbf{n }}\,
\mathrm{div} \, {\mathbf{w}}-
{\mathbf{w}}\,\mathrm{div}\,{\mathbf{n }} + \frac {\partial
{\mathbf{n}}} {\partial {\mathbf{x}}}\, {\mathbf{w}}- \frac {\partial {%
\mathbf{w}}} {\partial {\mathbf{x}}}\, {\mathbf{n}}.
\end{equation*}
From relation ${\mathbf{n}}^T{\mathbf{n}}=1$, we obtain  $\displaystyle {\mathbf{n}}^T\frac {\partial {\mathbf{n}}} {\partial {\mathbf{x}}}= 0$. Using the definition of $H$,\, ($ 
 \displaystyle  H= - \mathrm{div}\,{\mathbf{n }}$),  we deduce on $S_t$ :
\begin{equation}
{\mathbf{n}^T} \mathrm{rot} ({\mathbf{n}} \times {\mathbf{w}}) =
\mathrm{div}
\, {\mathbf{w}}+H\;  {\mathbf{n}^T} {\mathbf{w}} - {\mathbf{n}^T%
} \frac {\partial {\mathbf{w}}} {\partial {\mathbf{x}}}\,
{\mathbf{n}}. \label{A0}
\end{equation}
{The surface energy is given by :
\begin{equation*}
E =\iint_{S_t} \sigma \; \vert d_1{\mathbf{x}}\wedge d_2{\mathbf{x}}\vert,
\end{equation*}
where $\displaystyle{d_i{\mathbf{x}}=\frac{\partial \mathbf{x}}{\partial s_i}ds_i} \; (i=1,2) $  and  $s_i$ are curvilinear coordinates on $S_t$. This integral can also be written as :  
\begin{equation*}
E =\iint_{S_t} \sigma\ \det\, ({\mathbf{n}},d_1{\mathbf{x}},d_2{\mathbf{x}})=\iint_{S_0}\sigma\, \hbox{det} (\mathbf{F}\mathbf{F}^{-1}{\mathbf{n}},\mathbf{F}d_{10}{\mathbf{X}},\mathbf{F}d_{20}{\mathbf{X}}).
\end{equation*}
Here $\displaystyle{d_{i0}\mathbf{X}=\frac{\partial \mathbf{X}}{\partial s_{i0}}ds_{i0}}$ and  $s_{i0}$ are the corresponding curvilinear coordinates on $S_0$.
Finally, 
\begin{equation*}
E =\iint_{S_0}\sigma\;   (\hbox{det}\mathbf{F}) \;  \hbox{det} (\mathbf{F}^{-1}{\mathbf{n}}, d_{10}{\mathbf{X}},d_{20}{\mathbf{X}})=\iint_{S_0}\sigma\;     \hbox{det} ((\hbox{det}\mathbf{F}) \;\mathbf{F}^{-1}{\mathbf{n}}, d_{10}{\mathbf{X}},d_{20}{\mathbf{X}}).
\end{equation*}
Let us remark that $\displaystyle{(\hbox{det}\mathbf{F}) \;\mathbf{F}^{-1}{\mathbf{n}}}$  is the image of $\mathbf{n}$ and is not the normal vector   to $S_0$ because $\mathbf{F}$ is not an orthogonal transformation.} 

One has :
\begin{equation*}
\begin{array}{cc}
\displaystyle \delta E =\iint_{S_0}\delta \sigma\ \det \mathbf{F}\
\hbox{det}\,
(\mathbf{F}^{-1}{\mathbf{n}},d_{10}{\mathbf{X}},d_{20}{\mathbf{X}}) \\
\displaystyle +
\iint_{S_0}\sigma\, \delta\big(\det  \mathbf{F}\ \hbox{det}\, (\mathbf{F}^{-1}{\mathbf{n}%
},d_{10}{\mathbf{X}},d_{20}{\mathbf{X}})\big).
\end{array}%
\end{equation*}
Using Lemma \ref{basic_lemma}, one gets :
\begin{equation*}
\begin{array}{cc}
\displaystyle \iint_{S_0}\sigma\, \delta\big(\det \mathbf{F}\
\hbox{det}\,
(\mathbf{F}^{-1}{\mathbf{n}},d_{10}{\mathbf{X}},d_{20}{\mathbf{X}})\big) = &  \\
\displaystyle \iint_{S_t}  \sigma\ \mathrm{div}\, {\boldsymbol{\zeta}} \ \det ({%
\mathbf{n}},d_1{\mathbf{x}},d_2{\mathbf{x}}) + \sigma\, \det \left (%
\displaystyle  \delta \mathbf{n},d_1{\mathbf{x}},d_2{\mathbf{x}}\right )  & 
\displaystyle -
\sigma \det \left (\displaystyle \frac {\partial{\boldsymbol{\zeta}}}{\partial {\mathbf{x}}}\,{\mathbf{n}}, d_1{\mathbf{x}}, d_2{\mathbf{x}} \right )     \\
= \displaystyle \iint_{S_t} \left( \mathrm{div} (\sigma\,{\boldsymbol{\zeta}} )-(%
\boldsymbol{\nabla}^{T} \sigma) \, {\boldsymbol{\zeta}} -\sigma {\mathbf{n}}^T \frac {%
\partial {\boldsymbol{\zeta}}} {\partial {\mathbf{x}}} \, {\mathbf{n}} \right )
ds. &
\end{array}%
\end{equation*}
Relation (\ref{A0}) yields
\begin{equation*}
\displaystyle \mathrm{div}\, (\sigma\,{\boldsymbol{\zeta}}) +   \sigma H\, {\mathbf{n}}^T {\boldsymbol{\zeta}} - {\mathbf{n}}^T \frac
{\partial
(\sigma\, {\boldsymbol{\zeta}})} {\partial {\mathbf{x}}} \, {\mathbf{n}} = {%
\mathbf{n}}^T\, \mathrm{rot}\, (\sigma\,{\mathbf{n}}\times
{\boldsymbol{\zeta}} ).
\end{equation*}
It implies 
\begin{equation}
\begin{array}{cc}
\displaystyle \iint_{S_0}\sigma\,\delta \big(\det
\mathbf{F}\,\hbox{det}\,
(\mathbf{F}^{-1}{\mathbf{n}},d_{10}{\mathbf{X}},d_{20}{\mathbf{X}})\big) = & \label{variationE} \\
\displaystyle\ \iint_{S_t} - \left(\sigma  {H} \, \mathbf{n}^T+
(\boldsymbol{\nabla}^{T}\sigma)\, \mathbf{P}   \right)
{\boldsymbol{\zeta}}
\, ds+\iint_{S_t}{\mathbf{n}}^T\ \mathrm{rot}\, (\sigma\,{\mathbf{n}}\times
\boldsymbol{\zeta})\,ds. \notag &
\end{array}%
\end{equation}
Since $\mathbf{P}\,  \boldsymbol{\nabla}\sigma    \equiv
 \boldsymbol{\nabla}_{\rm tg}\sigma$, one has  
\begin{equation*}
\iint_{S_t}{\mathbf{n}}^T\ \mathrm{rot}\, (\sigma\,{\mathbf{n}}\times{%
\boldsymbol{\zeta}})\,ds
=\int_{C_t}{\rm det}(\mathbf{t},\sigma\,{\mathbf{n}},\boldsymbol{\zeta})\,dl
= \int_{C_t} \sigma\, \mathbf{n^{\prime}}^T \boldsymbol{\zeta}~dl,
\end{equation*}
and we  obtain  Lemma \ref{varsurfO}.
\end{proof}
\begin{lemma}
\label{derivative_with_respect_to_R}
Let $\sigma$ be a function of curvature tensor $\mathbf{R}$, or equivalently, a function of $H$ and $K$. Then,
\begin{equation}
\frac{\partial\sigma}{\partial\mathbf{R}}=
a\;\mathbf{I}+b\; \mathbf{R}\quad {\rm with}\quad a=
\frac{\partial\sigma}{\partial H}+ 
H\frac{\partial\sigma}{\partial K}\quad {\rm and}\quad
b=-\frac{\partial\sigma}{\partial K}, 
\label{lemmea}
\end{equation}
where for the sake of simplicity, we indifferently write $\sigma(\mathbf{R})$ or $\sigma(H,K)$. In particular, this implies :
\begin{equation}
 \mathbf{n}^T
 \frac{\partial\sigma}{\partial\mathbf{R}}\;\frac{\partial\mathbf{n}}{\partial\mathbf{x}}=\mathbf{0}.
 \label{lemmeb}
\end{equation}
\end{lemma}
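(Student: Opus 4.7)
The plan is to view $\sigma$ as a composition $\sigma(H(\mathbf{R}), K(\mathbf{R}))$ and apply the chain rule, since $H$ and $K$ are explicit scalar functions of the tensor $\mathbf{R}$. Specifically, I would write
\begin{equation*}
H = \mathrm{tr}\,\mathbf{R}, \qquad K = \tfrac{1}{2}\bigl((\mathrm{tr}\,\mathbf{R})^{2} - \mathrm{tr}(\mathbf{R}^{2})\bigr),
\end{equation*}
so that
\begin{equation*}
\frac{\partial \sigma}{\partial \mathbf{R}} = \frac{\partial \sigma}{\partial H}\,\frac{\partial H}{\partial \mathbf{R}} + \frac{\partial \sigma}{\partial K}\,\frac{\partial K}{\partial \mathbf{R}}.
\end{equation*}

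Next I would evaluate the two tensorial derivatives by taking variations. From $\delta(\mathrm{tr}\,\mathbf{R}) = \mathrm{tr}(\delta\mathbf{R}) = \mathbf{I}:\delta\mathbf{R}$ one reads off $\partial H/\partial \mathbf{R} = \mathbf{I}$. From $\delta\,\mathrm{tr}(\mathbf{R}^{2}) = 2\,\mathrm{tr}(\mathbf{R}\,\delta\mathbf{R}) = 2\mathbf{R}:\delta\mathbf{R}$ (using symmetry of $\mathbf{R}$) one obtains $\partial\,\mathrm{tr}(\mathbf{R}^{2})/\partial \mathbf{R} = 2\mathbf{R}$, hence $\partial K/\partial \mathbf{R} = H\mathbf{I} - \mathbf{R}$. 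Substituting and collecting terms yields
\begin{equation*}
\frac{\partial \sigma}{\partial \mathbf{R}} = \Bigl(\frac{\partial \sigma}{\partial H} + H\,\frac{\partial \sigma}{\partial K}\Bigr)\mathbf{I} \;-\; \frac{\partial \sigma}{\partial K}\,\mathbf{R},
\end{equation*}
which is exactly \eqref{lemmea} with the announced $a$ and $b$.

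For the second identity \eqref{lemmeb}, I would multiply the formula just obtained on the left by $\mathbf{n}^{T}$. Since $\mathbf{R}$ is symmetric and $\mathbf{R}\mathbf{n} = \mathbf{0}$, we have $\mathbf{n}^{T}\mathbf{R} = \mathbf{0}$, so
\begin{equation*}
\mathbf{n}^{T}\frac{\partial \sigma}{\partial \mathbf{R}} = a\,\mathbf{n}^{T}.
\end{equation*}
Finally, the identity $\mathbf{n}^{T}\partial\mathbf{n}/\partial\mathbf{x} = \mathbf{0}$, already recorded earlier in the paper as a consequence of $\mathbf{n}^{T}\mathbf{n} = 1$, closes the argument.

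I do not anticipate any serious obstacle: the computation is a direct chain rule plus use of the algebraic properties $\mathbf{R}^{T}=\mathbf{R}$ and $\mathbf{R}\mathbf{n}=\mathbf{0}$. The only point that requires a touch of care is the derivation of $\partial K/\partial\mathbf{R}$, where one must correctly handle the factor of two produced by differentiating $\mathrm{tr}(\mathbf{R}^{2})$; writing out $\delta\,\mathrm{tr}(\mathbf{R}^{2}) = \mathrm{tr}(\delta\mathbf{R}\,\mathbf{R}) + \mathrm{tr}(\mathbf{R}\,\delta\mathbf{R})$ and invoking the cyclic property makes this transparent. Note that it is not necessary to impose the side constraint $\mathbf{R}\mathbf{n} = \mathbf{0}$ in computing the derivative: the chain-rule formula is valid for $\sigma$ viewed as an invariant function on symmetric tensors, and the constraint is then used only afterwards when proving \eqref{lemmeb}.
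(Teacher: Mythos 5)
Your argument is correct and follows essentially the same route as the paper: the chain rule with $\partial H/\partial\mathbf{R}=\mathbf{I}$ and $\partial K/\partial\mathbf{R}=H\mathbf{I}-\mathbf{R}$ obtained from $\partial\,\mathrm{tr}(\mathbf{R}^k)/\partial\mathbf{R}=k\mathbf{R}^{k-1}$, followed by the algebraic identities killing the boundary term. The only cosmetic difference is in \eqref{lemmeb}, where you invoke $\mathbf{n}^T\mathbf{R}=\mathbf{0}$ directly while the paper substitutes $\mathbf{R}=-\frac{\partial\mathbf{n}}{\partial\mathbf{x}}\mathbf{P}$ and uses $\mathbf{n}^T\frac{\partial\mathbf{n}}{\partial\mathbf{x}}=\mathbf{0}$; both are equally valid.
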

\begin{proof}:
Since $H={\rm tr}\,\mathbf{R}$, $2K=\left({\rm tr}\,\mathbf{R}\right)^2-{\rm tr}\left(\mathbf{R}^2\right)$, and 
\begin{equation*}
\frac{\partial {\rm tr}\left(\mathbf{R}^k\right)}{\partial \mathbf{R}}=k\;\mathbf{R}^{k-1},
\end{equation*}
one gets 
\begin{equation*}
\frac{\partial\sigma}{\partial\mathbf{R}}=\left(\frac{\partial\sigma}{\partial H}+ 
H\frac{\partial\sigma}{\partial K}\right)\mathbf{I}-\frac{\partial\sigma}{\partial K}\mathbf{R}.
\end{equation*}
Since
\begin{equation}
\mathbf{R} =-\frac{\partial\mathbf{n}}{\partial\mathbf{x}}\,
\mathbf{P}\quad {\rm and}\quad
\frac{\partial\sigma}{\partial\mathbf{R}}= a \;\mathbf{I}+b
\mathbf{R},\label{R}
\end{equation}
we obtain 
\begin{equation*}
\mathbf{n}^T
 \frac{\partial\sigma}{\partial\mathbf{R}}\;\frac{\partial\mathbf{n}}{\partial\mathbf{x}}=
 a\; \mathbf{n}^T\frac{\partial\mathbf{n}}{\partial\mathbf{x}}-b\;\mathbf{n}^T
 \left(\frac{\partial\mathbf{n}}{\partial\mathbf{x}}\right)^2=\mathbf{0}.
\end{equation*}
\end{proof}

\section{Variation of $\sigma$}
This is a key part of the paper. The variation of the surface energy per unit area is obtained   in  the general case $\sigma =\sigma (H,K)$.  
The membrane is determined by a surface $S_t$ having  a closed contact  line  $C_t$ on a rigid surface $\mathcal{S}=S_1\cup S_2$ (see Fig. \ref{fig1}).  The dependence on other parameters such as    concentrations of surfactants on the membranes can further  be taken into account as in  \cite{Gouin_2014,Steigmann1}.
\begin{lemma}\label{key1}
The variation of surface energy $\sigma (\mathbf{R})$ is given by the relation :
\begin{equation}
\delta \sigma =  -{\rm div_{tg}} \,\left(\frac{\partial\sigma}{\partial\mathbf{R}}\mathbf{R}\, \boldsymbol{\zeta} +	\mathbf{P}\frac{\partial\sigma}{\partial\mathbf{R}}\,\delta\mathbf{n}\right)
+{\rm div_{tg}}\left(\frac{\partial\sigma}{\partial\mathbf{R}}\mathbf{R}\right)  \boldsymbol{\zeta}+{\rm div_{tg}}\left(\mathbf{P}\frac{\partial\sigma}{\partial\mathbf{R}} \right)  \delta\mathbf{n}. \label{sigmavar}
\end{equation}
\end{lemma}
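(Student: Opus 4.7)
The plan is to apply the chain rule to $\sigma=\sigma(\mathbf{R})$, writing
\begin{equation*}
\delta\sigma = {\rm tr}\left(\frac{\partial\sigma}{\partial\mathbf{R}}\,\delta\mathbf{R}\right),
\end{equation*}
and then to massage the right-hand side until it matches the divergence structure on the right of \eqref{sigmavar}.

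First, I would compute $\delta\mathbf{R}$ from the defining expression $\mathbf{R}=-(\partial\mathbf{n}/\partial\mathbf{x})\,\mathbf{P}$. Combining Rel.~\eqref{variation_gradientn} with the identity $\delta\mathbf{P}=-\delta\mathbf{n}\,\mathbf{n}^T-\mathbf{n}\,\delta\mathbf{n}^T$ (which follows from $\mathbf{P}=\mathbf{I}-\mathbf{n}\mathbf{n}^T$) produces four contributions: a ``main'' term $-(\partial\delta\mathbf{n}/\partial\mathbf{x})\mathbf{P}$, a ``Lagrangian'' term $(\partial\mathbf{n}/\partial\mathbf{x})(\partial\boldsymbol{\zeta}/\partial\mathbf{x})\mathbf{P}$ coming from $\delta\mathbf{F}^{-1}$, and two rank-one remainders $(\partial\mathbf{n}/\partial\mathbf{x})\,\delta\mathbf{n}\,\mathbf{n}^T$ and $(\partial\mathbf{n}/\partial\mathbf{x})\,\mathbf{n}\,\delta\mathbf{n}^T$.

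Next, I would take the trace against $\partial\sigma/\partial\mathbf{R}$ term by term. The first rank-one remainder vanishes by Rel.~\eqref{lemmeb}. In the Lagrangian term I would substitute $\partial\mathbf{n}/\partial\mathbf{x}=-\mathbf{R}+(\partial\mathbf{n}/\partial\mathbf{x})\,\mathbf{n}\mathbf{n}^T$: the $-\mathbf{R}$ piece yields the clean trace $-{\rm tr}(\mathbf{P}\,(\partial\sigma/\partial\mathbf{R})\,\mathbf{R}\,\partial\boldsymbol{\zeta}/\partial\mathbf{x})$ after cyclic permutation, while the $\mathbf{n}\mathbf{n}^T$ piece leaves a rank-one residue. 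The main obstacle, and the reason the static derivation breaks down in dynamics, is showing that this residue cancels exactly with the second rank-one remainder. Using $\delta\mathbf{n}^T=-\mathbf{n}^T(\partial\boldsymbol{\zeta}/\partial\mathbf{x})\mathbf{P}$ from Rel.~\eqref{variation_n}, both scalars reduce by cyclic permutation to $\pm\,\mathbf{n}^T(\partial\boldsymbol{\zeta}/\partial\mathbf{x})\,\mathbf{P}\,(\partial\sigma/\partial\mathbf{R})\,(\partial\mathbf{n}/\partial\mathbf{x})\,\mathbf{n}$ with opposite signs and annihilate. What survives is the compact identity
\begin{equation*}
\delta\sigma = -{\rm tr}\left(\mathbf{P}\,\frac{\partial\sigma}{\partial\mathbf{R}}\,\mathbf{R}\,\frac{\partial\boldsymbol{\zeta}}{\partial\mathbf{x}}\right) - {\rm tr}\left(\mathbf{P}\,\frac{\partial\sigma}{\partial\mathbf{R}}\,\frac{\partial\delta\mathbf{n}}{\partial\mathbf{x}}\right).
\end{equation*}

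Finally, I would convert these two traces into the divergence form of \eqref{sigmavar} via a Leibniz rule for the tangential divergence of a tensor-times-vector:
\begin{equation*}
{\rm div_{tg}}(\mathbf{A}_{\rm tg}\,\mathbf{v}) = ({\rm div_{tg}}\,\mathbf{A}_{\rm tg})\,\mathbf{v} + {\rm tr}\left(\mathbf{P}\,\mathbf{A}_{\rm tg}\,\frac{\partial\mathbf{v}}{\partial\mathbf{x}}\right),
\end{equation*}
valid for any tangent tensor $\mathbf{A}_{\rm tg}$ and smooth vector $\mathbf{v}$; this follows from \eqref{5} by expanding $\mathbf{A}_{\rm tg}$ column-by-column. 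I would apply it to $\mathbf{A}_{\rm tg}=(\partial\sigma/\partial\mathbf{R})\,\mathbf{R}$ (tangent since $\mathbf{P}\mathbf{R}=\mathbf{R}$) with $\mathbf{v}=\boldsymbol{\zeta}$, and to $\mathbf{A}_{\rm tg}=\mathbf{P}\,\partial\sigma/\partial\mathbf{R}$ with $\mathbf{v}=\delta\mathbf{n}$. Solving for the two remaining trace terms and substituting into the compact identity above reproduces exactly \eqref{sigmavar}.
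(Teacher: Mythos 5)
Your proof is correct and follows essentially the same route as the paper's: the chain rule through $\delta\mathbf{R}$, the use of identity \eqref{lemmeb} to kill one rank-one term while the other cancels against the $\mathbf{n}\mathbf{n}^T$ residue of the convective term, the same compact two-trace identity, and a product rule for the divergence to reach \eqref{sigmavar}. The only cosmetic difference is that the paper first applies the ordinary divergence product rule and then converts ${\rm div}$ to ${\rm div_{tg}}$ via Eq.~\eqref{1}, whereas you invoke the tangential Leibniz rule directly.
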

\begin{proof}:
	Using Lemma \ref{basic_lemma},  we have :
	\begin{equation*}
\delta  \mathbf{R}= - \delta\left(\frac{\partial\mathbf{n} }{\partial \mathbf{x}}\, \mathbf{P}\right) =-\left( \frac{\partial\delta\mathbf{n} }{\partial \mathbf{x}}-\frac{\partial\mathbf{n} }{\partial \mathbf{x}}\frac{\partial\boldsymbol{\zeta} }{\partial \mathbf{x}}\right)  \mathbf{P}+\frac{\partial\mathbf{n} }{\partial \mathbf{x}}\,\delta\left( \mathbf{n}\mathbf{n}^T\right). 
	\end{equation*}
	By taking account of  Eq. \eqref{variation_n} and $ \delta\left( \mathbf{n}\mathbf{n}^T\right) =  \delta  \mathbf{n} \, \mathbf{n}^T+  \mathbf{n}\,\delta\mathbf{n}^T,$ we get :
		\begin{equation*}
	\delta  \mathbf{R}= -\frac{\partial\delta\mathbf{n} }{\partial \mathbf{x}}\, \mathbf{P}+\frac{\partial\mathbf{n} }{\partial \mathbf{x}}\frac{\partial\boldsymbol{\zeta} }{\partial \mathbf{x}}\, \mathbf{P} -\frac{\partial\mathbf{n} }{\partial \mathbf{x}}\, \mathbf{P}\,\left(\frac{\partial\boldsymbol{\zeta} }{\partial \mathbf{x}}\right) ^T \mathbf{n}\mathbf{n}^T  -\frac{\partial\mathbf{n} }{\partial \mathbf{x}}\, \mathbf{n}\mathbf{n}^T\frac{\partial\boldsymbol{\zeta} }{\partial \mathbf{x}}\, \mathbf{P}.
	\end{equation*}
	We deduce :
\begin{eqnarray*}
	\delta\sigma &= & {\rm tr }\left(\frac{\partial\sigma}{\partial\mathbf{R}}\delta\mathbf{R}\right)\\
	&=& {\rm tr }\left[\frac{\partial\sigma}{\partial\mathbf{R}}\left(-\frac{\partial\delta\mathbf{n} }{\partial \mathbf{x}} \mathbf{P}+\frac{\partial\mathbf{n} }{\partial \mathbf{x}}\frac{\partial\boldsymbol{\zeta} }{\partial \mathbf{x}} \mathbf{P} -\frac{\partial\mathbf{n} }{\partial \mathbf{x}}\, \mathbf{P}\,\left(\frac{\partial\boldsymbol{\zeta} }{\partial \mathbf{x}}\right) ^T \mathbf{n}\mathbf{n}^T  -\frac{\partial\mathbf{n} }{\partial \mathbf{x}}\, \mathbf{n}\mathbf{n}^T\frac{\partial\boldsymbol{\zeta} }{\partial \mathbf{x}}\, \mathbf{P}\right) \right].
\end{eqnarray*} 
From Eq.  \eqref{lemmeb}, we get $\displaystyle \mathbf{n}\mathbf{n}^T\frac{\partial\sigma}{\partial\mathbf{R}}
\frac{\partial\mathbf{n} }{\partial \mathbf{x}}\, \frac{\partial\boldsymbol{\zeta} }{\partial \mathbf{x}}=0 $ and  $\displaystyle \mathbf{n}\mathbf{n}^T\frac{\partial\sigma}{\partial\mathbf{R}}
\frac{\partial\mathbf{n} }{\partial \mathbf{x}}\, \mathbf{n}\mathbf{n}^T \frac{\partial\boldsymbol{\zeta} }{\partial \mathbf{x}}=0 $. \\ 

\noindent Consequently, $\displaystyle \frac{\partial {\sigma} }{\partial \mathbf{R}}\, \frac{\partial\mathbf{n} }{\partial \mathbf{x}}\,\mathbf{P} \, \frac{\partial\boldsymbol{\zeta} }{\partial \mathbf{x}}   = -\frac{\partial {\sigma} }{\partial \mathbf{R}}\, \mathbf{R}\, \frac{\partial\boldsymbol{\zeta} }{\partial \mathbf{x}}  $, which implies :
\begin{eqnarray*}
	\delta\sigma  
	&=& -{\rm tr }\left[ \mathbf{P}\frac{\partial\sigma}{\partial\mathbf{R}} \frac{\partial\delta\mathbf{n} }{\partial \mathbf{x}} +\frac{\partial {\sigma} }{\partial \mathbf{R}}\, \mathbf{R}\, \frac{\partial\boldsymbol{\zeta} }{\partial \mathbf{x}}       \right]\\
	&=& -{\rm div\left( \mathbf{P}\frac{\partial\sigma}{\partial\mathbf{R}} \delta  \mathbf{n}\right) }+{\rm div\left( \mathbf{P}\frac{\partial\sigma}{\partial\mathbf{R}} \right) }\delta  \mathbf{n}-{\rm div\left(\frac{\partial\sigma}{\partial\mathbf{R}}  \mathbf{R}\,\boldsymbol{\zeta}\right) }+{\rm div\left( \frac{\partial\sigma}{\partial\mathbf{R}}\,\mathbf{R} \right) }\boldsymbol{\zeta}.
\end{eqnarray*} 
By taking account of Eq. \eqref{1}, we get :
\begin{eqnarray*}
\delta\sigma  
&=& -{\rm div_{tg}\left( \mathbf{P}\frac{\partial\sigma}{\partial\mathbf{R}} \delta  \mathbf{n}\right) }+{\rm div_{tg}\left( \mathbf{P}\frac{\partial\sigma}{\partial\mathbf{R}} \right) }\delta  \mathbf{n}-
{\rm div_{tg}}\left(\frac{\partial\sigma}{\partial\mathbf{R}}\,\mathbf{R}  \boldsymbol{\zeta}\right) +{\rm div_{tg}}\left(\frac{\partial\sigma}{\partial\mathbf{R}}\,\mathbf{R}\right) \boldsymbol{\zeta} ,
\end{eqnarray*} 
and relation (\ref{sigmavar}) is proven. 

\end{proof}
Now, we have to study term $\displaystyle{{\rm div_{tg}\left( \mathbf{P}\frac{\partial\sigma}{\partial\mathbf{R}} \right) }\delta  
	\mathbf{n}}$. 
\begin{lemma}\label{key2}
\begin{eqnarray*}
	{\rm div_{tg}\left( \mathbf{P}\,\frac{\partial\sigma}{\partial\mathbf{R}} \right) }\delta  \mathbf{n} &= & -\,{\rm div_{tg}}\left[\mathbf{P}\,{\rm div_{tg}}^T\left(\mathbf{P}\,\frac{\partial\sigma}{\partial\mathbf{R}}\right)\mathbf{n}^T\boldsymbol{\zeta}\right] \\
	&+& {\rm div_{tg}}\left[\mathbf{P}\,{\rm div_{tg}}^T\left(\mathbf{P}\,\frac{\partial\sigma}{\partial\mathbf{R}}\right)\right] \mathbf{n}^T\boldsymbol{\zeta}   -\, {\rm div_{tg}}\left(\mathbf{P}\,\frac{\partial\sigma}{\partial\mathbf{R}}\right)\mathbf{R}\,\boldsymbol{\zeta} .
\end{eqnarray*} 
\end{lemma}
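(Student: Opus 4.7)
The plan is to first rewrite $\delta\mathbf{n}$ in a form that cleanly separates a gradient of the scalar $\mathbf{n}^T\boldsymbol{\zeta}$ from an algebraic term in $\boldsymbol{\zeta}$, and then to apply the product rule \eqref{4} to integrate by parts once. Let me set $\mathbf{B}=\mathbf{P}\,\partial\sigma/\partial\mathbf{R}$ (this is tangent on the left, and by Lemma \ref{derivative_with_respect_to_R} also tangent on the right).

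The key preliminary step is to derive the decomposition
\begin{equation*}
\delta\mathbf{n} \;=\; -\nabla_{\mathrm{tg}}\bigl(\mathbf{n}^T\boldsymbol{\zeta}\bigr)\;-\;\mathbf{R}\,\boldsymbol{\zeta}.
\end{equation*}
Starting from \eqref{variation_n}, I would write the identity $\nabla(\mathbf{n}^T\boldsymbol{\zeta})=(\partial\boldsymbol{\zeta}/\partial\mathbf{x})^T\mathbf{n}+(\partial\mathbf{n}/\partial\mathbf{x})^T\boldsymbol{\zeta}$ component-wise, project by $\mathbf{P}$, and use the identity $\mathbf{P}(\partial\mathbf{n}/\partial\mathbf{x})^T=-\mathbf{R}$, which follows from $\mathbf{R}=-(\partial\mathbf{n}/\partial\mathbf{x})\mathbf{P}$ together with the symmetry $\mathbf{R}^T=\mathbf{R}$ stated just before Lemma \ref{Lemma 1}.

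Substituting this decomposition splits the left-hand side as
\begin{equation*}
(\mathrm{div}_{\mathrm{tg}}\mathbf{B})\,\delta\mathbf{n} \;=\; -(\mathrm{div}_{\mathrm{tg}}\mathbf{B})\,\nabla_{\mathrm{tg}}(\mathbf{n}^T\boldsymbol{\zeta})\;-\;(\mathrm{div}_{\mathrm{tg}}\mathbf{B})\,\mathbf{R}\,\boldsymbol{\zeta}.
\end{equation*}
The second term is already the last term of the desired identity. For the first term, noting that it is a scalar, I would rewrite
\begin{equation*}
(\mathrm{div}_{\mathrm{tg}}\mathbf{B})\,\nabla_{\mathrm{tg}}(\mathbf{n}^T\boldsymbol{\zeta}) \;=\; \bigl(\nabla_{\mathrm{tg}}(\mathbf{n}^T\boldsymbol{\zeta})\bigr)^{\!T}\,\mathbf{P}\,\mathrm{div}_{\mathrm{tg}}^{T}(\mathbf{B}),
\end{equation*}
using symmetry and idempotency of $\mathbf{P}$. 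Then the product rule \eqref{4}, applied with scalar $f=\mathbf{n}^T\boldsymbol{\zeta}$ and tangential vector $\mathbf{v}_{\mathrm{tg}}=\mathbf{P}\,\mathrm{div}_{\mathrm{tg}}^{T}(\mathbf{B})$, turns this into the divergence
$\mathrm{div}_{\mathrm{tg}}\!\bigl[\mathbf{P}\,\mathrm{div}_{\mathrm{tg}}^{T}(\mathbf{B})\,\mathbf{n}^T\boldsymbol{\zeta}\bigr]$ minus $\mathrm{div}_{\mathrm{tg}}\!\bigl[\mathbf{P}\,\mathrm{div}_{\mathrm{tg}}^{T}(\mathbf{B})\bigr]\,\mathbf{n}^T\boldsymbol{\zeta}$, which after the overall minus sign produces exactly the first two terms of the stated lemma.

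The main obstacle is the first step, namely recognizing that the combination $\mathbf{P}(\partial\boldsymbol{\zeta}/\partial\mathbf{x})^T\mathbf{n}$ that appears in $\delta\mathbf{n}$ is nothing but a tangential gradient of the normal component of $\boldsymbol{\zeta}$ plus $\mathbf{R}\boldsymbol{\zeta}$; this is the essential algebraic manoeuvre that allows one derivative to be absorbed into the known kinematic quantity $\mathbf{n}^T\boldsymbol{\zeta}$. Once the decomposition of $\delta\mathbf{n}$ is in place, the remaining manipulation is just a single integration by parts via \eqref{4}, and the transposes line up without difficulty because $\mathbf{B}$ is tangent and $\mathbf{P}$ is a symmetric idempotent.
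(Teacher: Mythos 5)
Your proposal is correct and follows essentially the same route as the paper: the paper likewise rewrites $(\partial\boldsymbol{\zeta}/\partial\mathbf{x})^T\mathbf{n}$ via the gradient of $\mathbf{n}^T\boldsymbol{\zeta}$, uses $\mathbf{P}(\partial\mathbf{n}/\partial\mathbf{x})^T\boldsymbol{\zeta}=-\mathbf{R}\boldsymbol{\zeta}$ to get $\delta\mathbf{n}=-\nabla_{\rm tg}(\mathbf{n}^T\boldsymbol{\zeta})-\mathbf{R}\boldsymbol{\zeta}$, and then applies the product rule \eqref{4} to peel off the total tangential divergence. No gaps.
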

\begin{proof}:
Using  relation \eqref{variation_n}, one  obtains :
\begin{eqnarray*}
	{\rm div_{tg}\left( \mathbf{P}\frac{\partial\sigma}{\partial\mathbf{R}} \right) }\delta  \mathbf{n} &= & -	{\rm div}_{\rm tg}\left(\mathbf{P}\,\frac{\partial\sigma}{\partial\mathbf{R}}\right)\mathbf{P}\,\left(\frac{\partial\boldsymbol{\zeta} }{\partial \mathbf{x}}\right) ^T\mathbf{n}\\ &=& -	{\rm div_{\rm tg}}\left(\mathbf{P}\,\frac{\partial\sigma}{\partial\mathbf{R}}\right)\mathbf{P}\left[\left(\frac
	{\partial\left(\mathbf{n}^T\boldsymbol{\zeta}\right)}{\partial\mathbf{x}}\right)^T-\left(\frac{\partial\mathbf{n} }{\partial \mathbf{x}}\right)^T\boldsymbol{\zeta}\right]\\
	&=&-{\rm div_{tg}\left( \mathbf{P}\frac{\partial\sigma}{\partial\mathbf{R}} \right) }\nabla_{\rm tg}\left(\mathbf{n}^T\boldsymbol{\zeta}\right)-{\rm div_{tg}\left( \mathbf{P}\frac{\partial\sigma}{\partial\mathbf{R}} \right) }\mathbf{R}\,\boldsymbol{\zeta}\\
	&=& {\rm div_{tg}}\left[\mathbf{P}{\rm div_{tg}}^T\left(\mathbf{P}\frac{\partial\sigma}{\partial\mathbf{R}}\right)\right] \mathbf{n}^T\boldsymbol{\zeta}   -\, {\rm div_{tg}}\left(\mathbf{P}\frac{\partial\sigma}{\partial\mathbf{R}}\right)\mathbf{R}\boldsymbol{\zeta} \\
	&- & {\rm div_{tg}}\left[\mathbf{P}{\rm div_{tg}}^T\left(\mathbf{P}\frac{\partial\sigma}{\partial\mathbf{R}}\right)\mathbf{n}^T\boldsymbol{\zeta}\right].
	\end{eqnarray*} 
\end{proof}
Now, from Lemma \ref{varsurfO} and formula \eqref{sigmavar}, we obtain the following fundamental lemma. 
\begin{lemma}
	\label{key3}
	The variation of  surface energy $\displaystyle E=\iint_{S_t}\sigma\, ds $,  where $S_t$ has an oriented  boundary line $C_t$ with tangent unit vector $\mathbf{t}$ and binormal unit vector $\mathbf{n}'=  \mathbf{t} \times  \mathbf{n}$,  is given by the relation :
	\begin{eqnarray*}
	\delta E &=&  \iint_{S_t}\Big [\,{\rm div_{tg}} \left(\frac{\partial\sigma}{\partial\mathbf{R}}\mathbf{R}\right)  -{\rm div_{tg}}\left(	\mathbf{P}\frac{\partial\sigma}{\partial\mathbf{R}}\right)\mathbf{R}
	+{\rm div_{tg}}\left(\mathbf{P}\,{\rm div_{tg}^{\it T}} \left(	\mathbf{P}\frac{\partial\sigma}{\partial\mathbf{R}}\right)\right)  \mathbf{n}^T\notag   
	\\
&& \qquad -\,  \sigma\,H \,\mathbf{n}^T-\nabla_{\rm tg}^T\, \sigma\,\Big]{\boldsymbol \zeta}\; ds\label{Evar}\\
&+&  \int_{C_t}\mathbf{n}^{\prime T} \left\{ \left[\sigma\mathbf{I}-  	\frac{\partial\sigma}{\partial\mathbf{R}}\mathbf{R}-{\rm div_{tg}^{\it T}} \left(	\mathbf{P}\frac{\partial\sigma}{\partial\mathbf{R}}\right){\mathbf{n}}^T\right]\boldsymbol{\zeta} +\frac{\partial\sigma}{\partial\mathbf{R}}\mathbf{P}\left(\frac{\partial \boldsymbol{\zeta}}{\partial \mathbf{x}}\right)^T\mathbf{n} \right\}dl.\notag
	\end{eqnarray*}
\end{lemma}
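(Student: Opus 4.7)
The plan is to start from the formula of Lemma \ref{variation_surface_energy}, substitute into it the expression for $\delta\sigma$ provided by Lemma \ref{key1}, further reshape the leftover $\,{\rm div_{tg}}(\mathbf{P}\,\partial\sigma/\partial\mathbf{R})\,\delta\mathbf{n}$ term via Lemma \ref{key2}, and finally convert every total tangential divergence that appears into a contour integral over $C_t$ by the surface Stokes' theorem
\begin{equation*}
\iint_{S_t}{\rm div_{tg}}(\mathbf{v}_{\rm tg})\,ds=\int_{C_t}\mathbf{n^{\prime}}^T\mathbf{v}_{\rm tg}\,dl,
\end{equation*}
valid for any tangential vector field $\mathbf{v}_{\rm tg}$ on $S_t$.

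First, I would substitute (\ref{sigmavar}) into the integrand of Lemma \ref{variation_surface_energy}. This produces, under $\iint_{S_t}$, five pieces: the two original ones, $-\nabla_{\rm tg}^T\sigma\,\boldsymbol{\zeta}$ and $-\sigma H\mathbf{n}^T\boldsymbol{\zeta}$; two exact divergences, $-{\rm div_{tg}}(\frac{\partial\sigma}{\partial\mathbf{R}}\mathbf{R}\boldsymbol{\zeta})$ and $-{\rm div_{tg}}(\mathbf{P}\frac{\partial\sigma}{\partial\mathbf{R}}\delta\mathbf{n})$; and two further integrands, ${\rm div_{tg}}(\frac{\partial\sigma}{\partial\mathbf{R}}\mathbf{R})\boldsymbol{\zeta}$ and ${\rm div_{tg}}(\mathbf{P}\frac{\partial\sigma}{\partial\mathbf{R}})\,\delta\mathbf{n}$. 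I would then expand the last piece by Lemma \ref{key2}, which turns it into one more exact divergence $-{\rm div_{tg}}[\mathbf{P}\,{\rm div_{tg}}^T(\mathbf{P}\frac{\partial\sigma}{\partial\mathbf{R}})\mathbf{n}^T\boldsymbol{\zeta}]$ together with two further surface integrands, $-{\rm div_{tg}}(\mathbf{P}\frac{\partial\sigma}{\partial\mathbf{R}})\mathbf{R}\,\boldsymbol{\zeta}$ and ${\rm div_{tg}}[\mathbf{P}\,{\rm div_{tg}}^T(\mathbf{P}\frac{\partial\sigma}{\partial\mathbf{R}})]\mathbf{n}^T\boldsymbol{\zeta}$.

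Second, before invoking Stokes' theorem I would verify tangentiality of every vector field sitting inside a surface divergence. For $\frac{\partial\sigma}{\partial\mathbf{R}}\mathbf{R}\boldsymbol{\zeta}$, Lemmas \ref{Lemma 1} and \ref{derivative_with_respect_to_R} yield $\frac{\partial\sigma}{\partial\mathbf{R}}\mathbf{R}=(a+bH)\mathbf{R}-bK\mathbf{P}$, whose action annihilates any normal component; for the other two fields the leading projector $\mathbf{P}$ enforces tangentiality directly. Hence no curvature correction arises in Stokes, and using $\mathbf{n^{\prime}}^T\mathbf{P}=\mathbf{n^{\prime}}^T$ together with $\delta\mathbf{n}=-\mathbf{P}(\partial\boldsymbol{\zeta}/\partial\mathbf{x})^T\mathbf{n}$ from Lemma \ref{basic_lemma}, the three resulting line contributions simplify respectively to $-\mathbf{n^{\prime}}^T\frac{\partial\sigma}{\partial\mathbf{R}}\mathbf{R}\boldsymbol{\zeta}$, $+\mathbf{n^{\prime}}^T\frac{\partial\sigma}{\partial\mathbf{R}}\mathbf{P}(\partial\boldsymbol{\zeta}/\partial\mathbf{x})^T\mathbf{n}$, and $-\mathbf{n^{\prime}}^T{\rm div_{tg}}^T(\mathbf{P}\frac{\partial\sigma}{\partial\mathbf{R}})\mathbf{n}^T\boldsymbol{\zeta}$.

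Third, I would collect everything. The surviving surface integrands assemble exactly into the announced bulk integrand, while the four line integrands (including the initial $\sigma\mathbf{n^{\prime}}^T\boldsymbol{\zeta}$ coming from Lemma \ref{variation_surface_energy}) regroup into the bracket $[\sigma\mathbf{I}-\frac{\partial\sigma}{\partial\mathbf{R}}\mathbf{R}-{\rm div_{tg}}^T(\mathbf{P}\frac{\partial\sigma}{\partial\mathbf{R}})\mathbf{n}^T]\boldsymbol{\zeta}$ plus the $\frac{\partial\sigma}{\partial\mathbf{R}}\mathbf{P}(\partial\boldsymbol{\zeta}/\partial\mathbf{x})^T\mathbf{n}$ piece, giving the claim. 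The calculation is essentially bookkeeping; the only real subtlety is the set of tangentiality checks themselves, because the dynamic framework of the paper expressly forbids the static identity $(\partial\mathbf{n}/\partial\mathbf{x})\mathbf{n}=\mathbf{0}$, and the projectors $\mathbf{P}$ built into Lemmas \ref{key1} and \ref{key2} are exactly what allows one to sidestep this obstruction.
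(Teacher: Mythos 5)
Your proposal is correct and follows essentially the same route as the paper: substitute Lemma \ref{key1} into Lemma \ref{variation_surface_energy}, expand the $\,{\rm div_{tg}}\left(\mathbf{P}\,\partial\sigma/\partial\mathbf{R}\right)\delta\mathbf{n}$ term via Lemma \ref{key2}, and convert the resulting total tangential divergences into line integrals by the surface Stokes formula (the paper's identity \eqref{2}), using $\mathbf{n}^{\prime T}\mathbf{P}=\mathbf{n}^{\prime T}$ and $\delta\mathbf{n}=-\mathbf{P}\left(\partial\boldsymbol{\zeta}/\partial\mathbf{x}\right)^T\mathbf{n}$ on the boundary. Your explicit tangentiality checks on the divergence arguments are a welcome bit of extra care that the paper leaves implicit, but they do not change the argument.
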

\medskip

 \begin{proof} :
 By taking account of	Lemma \ref{key1} and Lemma \ref{key2}, we get
 \begin{eqnarray*}
 \delta \sigma &= & -{\rm div_{tg}} \,\left[\frac{\partial\sigma}{\partial\mathbf{R}}\mathbf{R}\, \boldsymbol{\zeta} +	\mathbf{P}\frac{\partial\sigma}{\partial\mathbf{R}}\,\delta\mathbf{n}+\mathbf{P}\,{\rm div_{tg}}^T\left(\mathbf{P}\frac{\partial\sigma}{\partial\mathbf{R}}\right)\mathbf{n}^T\boldsymbol{\zeta}\right]\\
 &+&\left[{\rm div_{tg}} \left(	\frac{\partial\sigma}{\partial\mathbf{R}}\mathbf{R}\right) -\, {\rm div_{tg}}\left(\mathbf{P}\frac{\partial\sigma}{\partial\mathbf{R}}\right)\mathbf{R} +{\rm div_{tg}}\left(\mathbf{P}{\rm div_{tg}}^T\left(\mathbf{P}\frac{\partial\sigma}{\partial\mathbf{R}}\right) \right)\mathbf{n}^T\right]\boldsymbol{\zeta} .
 \end{eqnarray*}
By using Eq. \eqref{2} and Lemma \ref{variation_surface_energy} associated with the Stokes formula, and   property  $\mathbf{n}^{\prime T} \mathbf{P}=\mathbf{n}^{\prime T}$, we obtain :
\begin{eqnarray*}
\delta E &=&  \iint_{S_t}\Big [\,{\rm div_{tg}} \left(\frac{\partial\sigma}{\partial\mathbf{R}}\mathbf{R}\right)  -{\rm div_{tg}}\left(	\mathbf{P}\frac{\partial\sigma}{\partial\mathbf{R}}\right)\mathbf{R}
+{\rm div_{tg}}\left(\mathbf{P}\,{\rm div_{tg}^{\it T}} \left(	\mathbf{P}\frac{\partial\sigma}{\partial\mathbf{R}}\right)\right)  \mathbf{n}^T\notag   
\\
&& \qquad -\,  \sigma\,H \,\mathbf{n}^T-\nabla_{\rm tg}^T\, \sigma\,\Big]{\boldsymbol \zeta}\; ds\label{Evar}\\
&+&  \int_{C_t}\mathbf{n}^{\prime T} \left\{ \left[\sigma\mathbf{I}-  	\frac{\partial\sigma}{\partial\mathbf{R}}\mathbf{R}-{\rm div_{tg}^{\it T}} \left(	\mathbf{P}\frac{\partial\sigma}{\partial\mathbf{R}}\right){\mathbf{n}}^T\right]\boldsymbol{\zeta} -\frac{\partial\sigma}{\partial\mathbf{R}}\,\delta\mathbf{n} \right\}dl.\notag
\end{eqnarray*}
From Lemma \ref{basic_lemma} we deduce :
\begin{equation*}
-\mathbf{n}^{\prime T} \frac{\partial\sigma}{\partial\mathbf{R}}\,\delta\mathbf{n}=\mathbf{n}^{\prime T}  \frac{\partial\sigma}{\partial\mathbf{R}}\mathbf{P}\left(\frac{\partial \boldsymbol{\zeta}}{\partial \mathbf{x}}\right)^T\mathbf{n},
\end{equation*} 
which proves Lemma \ref{key3}.
	\end{proof}
\section{Equations of motion and  shape equation}
The vesicle occupies domain ${D_t}$ with a free boundary  $S_{t}$ which is the membrane surface, and  $S_1$  which belongs to the rigid surface $\mathcal{S}=S_1\cup S_2$. $S_1$   denotes the footprint of $D_t$ on $\mathcal{S}$, and $C_t$ is the closed edge (contact line) between $S_1$ and $S_2$ (see Fig. 1). 
\\
We denote $\mathbf{n}_1$ the external unit normal to $S_1$ along  contact line $C_t$. Then denoting  $\mathbf{t}_1=-\mathbf{t}$, one has :
\begin{equation*}
\mathbf{n}'_1 = \mathbf{t}_1 \times \mathbf{n}_1=\mathbf{n}_1 \times \mathbf{t}. 
\end{equation*}
The surface energy of membrane  $S_{t}$  is denoted $\sigma$. Solid surfaces $S_{1}$ and $S_{2}$ have  constant
surface energies denoted $\sigma _{1}$ and $\sigma _{2}$. The geometrical notations
are shown  in  Fig. 1.
\begin{figure}[h]
\begin{center}
\includegraphics[width=9cm]{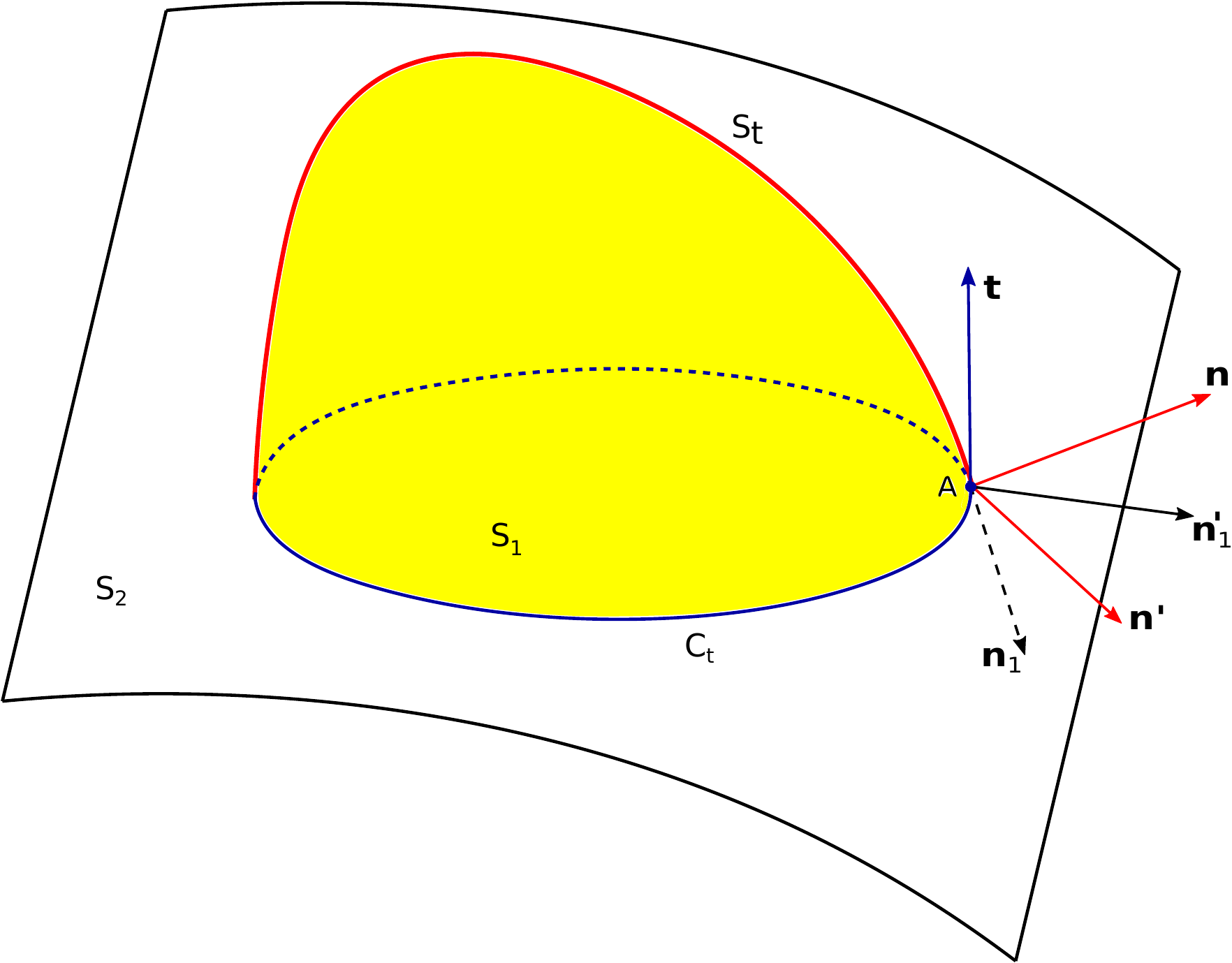}
\end{center}
\caption{ A drop  lies on   solid surface $\mathcal{S}=S_1\cup S_2$ ; $S_{t}$ is a free surface;  $\mathbf {n} _1$
and $\mathbf {n}$ are the external unit normal vectors to $S_1$  and $S_t$, respectively. Contact line $C_t$ separates $S_1$ and $S_2$,
$\mathbf{t} $ is the unit tangent vector to $C_t$ on $\mathcal{S}$. Vectors $\mathbf {n} ^{\prime}_1 = \mathbf {n} 
_{1}\times\mathbf {t} $ and $ \mathbf {n} ^{\prime}=
\mathbf {t} \times \boldsymbol{n}$ are the binormals to $C_t$
relatively to $\mathcal{S}$ and $S_{t}$ at point $A$ of $C_t$, respectively. }
\label{fig1}
\end{figure}

One can formulate the virtual work principle in the form  \cite{Germain_1973,Gouin_2007}:
\begin{equation*}
\delta{\cal A}_{e}+\delta{\cal A}_{i}-\delta {\cal E}=0,
\end{equation*}
where $\delta{\cal A}_{e}$ is the virtual work of external forces, $\delta{\cal A}_{i}$ is the virtual work of inertial forces, and $\delta {\cal E}$ is the variation of the total energy. 
The  energy  ${\cal E}$ is taken in  the form :
\begin{equation*}
 {\cal E}= \iiint_{D_t}\rho\,\varepsilon\, dv+ \iint_{S_t}\sigma\, ds+ \iint_{S_1}\sigma_1\, ds, 
\end{equation*}
where   specific internal energy $\varepsilon$ is a function of density $\rho$. As we mentioned before,  one can also include in this dependence several scalar quantities which are transported by the flow  (specific entropy, mass fractions of surfactants, etc.).
From Lemma  \ref{basic_lemma}, Eq. \eqref{Jacobi} and the mass conservation law :
\begin{equation*}
\rho\, {\rm det} \, \mathbf{F}= \rho_0(\mathbf{X}),
\end{equation*}
we obtain the variation of the specific energy and density at fixed Lagrangian coordinates in the form : 
\begin{equation*}
\delta\varepsilon = \frac{p}{\rho^2}\,\delta\rho\quad {\rm with}\quad \delta\rho= -\rho\,{\rm div}{\boldsymbol \zeta} ,
\end{equation*}
where $p$ is the thermodynamical pressure. 
Consequently, the variation of the first term is \cite{Berdichevsky,Gavrilyuk,Serrin}: 
\begin{eqnarray*}
	\delta \iiint_{D_t}\rho\,\varepsilon\, dv &=& \delta \iiint_{D_0}\rho_0\,\varepsilon\, dv_0= \iiint_{D_0}\rho_0\,\delta \varepsilon\, dv_0\\
	&=& \iiint_{D_t}\rho\,\delta\varepsilon\, dv=- \iiint_{D_t}p\ {\rm div}\,{\boldsymbol \zeta}\, dv.
	\end{eqnarray*}
The variation of the surface energy is  given in Lemma \ref{variation_surface_energy}. The third term is the surface energy of  $S_1$ with energy $\sigma_1$ per unit surface. 
The virtual work of the external forces is given in the form : 
\begin{equation*}
\delta{\cal A}_{e}=\iiint_{D_t}\rho\, \mathbf{f}^{T} {\boldsymbol \zeta}dv+\iint_{S_t}\mathbf{T}^T{\boldsymbol \zeta}ds+\int_{C_t}\sigma_2{\mathbf{n}_1 ^{\prime T}} \boldsymbol{\zeta}ds, 
\end{equation*}
where $\rho\,\mathbf f$ is the volume external force in $D_t$, $\mathbf{T}$ is the external stress vector at the free surface $S_t$, and $\sigma_2\, {\mathbf{n}_1^{\prime}}$ is the line tension vector exerted on $C_t$.  The last term on the right-hand side comes from  Lemma \ref{variation_surface_energy} which can be also applied for rigid surfaces. Finally, 
\begin{equation*}
\delta{\cal A}_{i}=-\iiint_{D_t} \rho\, \mathbf{a}^T {\boldsymbol \zeta}dv
\end{equation*}
is the virtual work of inertial force, where $\mathbf{a}$ is the acceleration.    The virtual work of forces $\delta \mathcal{T}$  applied to the material volume $D_t$ is defined as : 
\begin{eqnarray}
\delta \mathcal{T} &=&\displaystyle\iiint_{D_t}\left( -\rho\, \mathbf{a}^T +\rho\, \mathbf{f}^{T}-\mathbf{ 
}\boldsymbol{\nabla }^{T}p\right) \boldsymbol{\zeta \ }dv+\iint_{S_{1}}
\left( p+H_{1}\sigma _{1}\right) \mathbf{n}_{1}^{T}
 \boldsymbol{\zeta  }\,ds \notag\\
\displaystyle&+& \iint_{S_{t}}\bigg [\,-{\rm div_{tg}} \left(\frac{\partial\sigma}{\partial\mathbf{R}}\mathbf{R}\right)  +{\rm div_{tg}}\left(	\mathbf{P}\frac{\partial\sigma}{\partial\mathbf{R}}\right) \mathbf{R}\label{vwork}
\\
\displaystyle  & -&\, {\rm div_{tg}}\left(\mathbf{P}\,{\rm div_{tg}^{\it T}} \left(	\mathbf{P}\frac{\partial\sigma}{\partial\mathbf{R}}\right)\right)  \mathbf{n}^{T} +  \left( p+ H \sigma \right) \mathbf{n}%
^{T}+\boldsymbol{\nabla}_{\rm tg}^{T}\sigma   +\mathbf{T}^{T}\bigg]   \boldsymbol{\zeta}\,ds\notag
\\
\displaystyle  &- &\int_{C_t}\left\lbrace  \left[ \left( \sigma _{1}-\sigma _{2}\right) \mathbf{n}%
_{1}^{\prime T}+\sigma \,\mathbf{n}^{\prime T}-\mathbf{n}^{\prime T}{\rm div}_{\rm tg}^T\left(	\mathbf{P}\frac{\partial\sigma}{\partial\mathbf{R}}\right)\mathbf{n} ^T-\mathbf{n}^{\prime T}\frac{\partial\sigma}{\partial\mathbf{R}}{\mathbf{R}}\right] \boldsymbol{%
	\zeta }  \right.\notag\\
\displaystyle & &\qquad \ \ +\left.\mathbf{n}^{\prime T} \frac{\partial\sigma}{\partial\mathbf{R}} \,\mathbf{P}\left(\frac{\partial \boldsymbol{\zeta}}{\partial \mathbf{x}}\right)^T\mathbf{n}  \right\rbrace dl. \notag
\end{eqnarray}
As usually,   $H_{1}$ and $H$ are the sum of principle curvatures  of surfaces $S_{1}$ and  $S_{t}$, respectively.  
Terms on $D_t$, $S_{1}$,  $S_{t}$  are in separable form with respect to the
field $\boldsymbol{\zeta }$.  Expression  \eqref{vwork} implies the equation of  motion in $D_t$ and boundary conditions on  surfaces $S_{1}$,  $S_{t}$ 
  \cite{Schwartz}.  Virtual displacement  $\boldsymbol{\zeta }$ must be compatible with conditions of the problem; for example, $S_{1}$ is an external surface to domain $D_t$ and consequently $\boldsymbol{\zeta }$ must be tangent to $S_{1}$. This notion is developed in \cite{Berdichevsky}.
They are presented below. 

\subsection{Equation of motion}
We consider   virtual displacements $\boldsymbol{
\zeta }$ which vanish  on the boundary of   $D_t$. The fundamental lemma of virtual displacements yields :
\begin{equation}
\rho\,\mathbf{a}+{\nabla } p =\rho\, \mathbf{f}, \label{vwork2}\ 
\end{equation}
which  is the classical Newton law in continuum mechanics.

\subsection{Condition  on surface  $S_1$}
{\color{black} Due to the fact that the surface $S_1$ is - {\it a priori} - given, the virtual displacements must be compatible with the geometry of $S_1$. This means that   the non-penetration condition (slip condition) is  verified : 
	\begin{equation}
	\mathbf{n}_{1}^{T}
\boldsymbol{\zeta  }=0.
\label{slip}
\end{equation} 
Constraint \eqref{slip} is equivalent to the introduction of  a Lagrange multiplier $\mathcal{P}_1$ into \eqref{vwork} where $\boldsymbol{\zeta }$ is now a virtual displacement without  constraint.   The corresponding term on  $S_1$ will be  modified into 
\begin{equation*}
\iint_{S_{1}}
\left( p+H_{1}\sigma _{1}-\mathcal{P}_1\right) \mathbf{n}_{1}^{T}
\boldsymbol{\zeta  }\,ds.
\end{equation*}
Since the variation of  $\boldsymbol{\zeta  }$ on $S_1$ is independent, Eq. (\ref{vwork}) implies :
\begin{equation}
\mathcal{P}_1=p +H_{1}\sigma _{1}.\label{shape1}
\end{equation}
}
This is the classical Laplace condition allowing us to obtain the normal stress component   $\mathcal{P}_1 \mathbf{n}_{1}$  exerted by surface $S_1$.

\subsection{\it Extended shape equation}
Taking account of Eqs. \eqref{vwork2} and  \eqref{shape1}, for all displacement $\boldsymbol{\zeta  }$ on   moving membrane $S_t$, one has from  Eq. (\ref{vwork}) : 
\begin{eqnarray*}
	\displaystyle& & \iint_{S_{t}}\bigg [\,-{\rm div_{tg}} \left(\frac{\partial\sigma}{\partial\mathbf{R}}\,\mathbf{R}\right)  +{\rm div_{tg}}\left(	\mathbf{P}\,\frac{\partial\sigma}{\partial\mathbf{R}}\right) \mathbf{R} 
	\\
	\displaystyle  & -&\, {\rm div_{tg}}\left(\mathbf{P}\,{\rm div_{tg}^{\it T}} \left(	\mathbf{P}\frac{\partial\sigma}{\partial\mathbf{R}}\right)\right)  \mathbf{n}^{T} +  \left( p+ H \sigma \right) \mathbf{n}%
	^{T}+\boldsymbol{\nabla}_{\rm tg}^{T}\sigma   +\mathbf{T}^{T}\bigg]   \boldsymbol{\zeta}\,ds=0 .
\end{eqnarray*}
It implies :
\begin{eqnarray}
&&\left\{p+ H  \sigma -{\rm div_{tg}}  \left[\mathbf{P}\,{\rm div_{tg}^{\it T}} \left(	\mathbf{P}\frac{\partial\sigma}{\partial\mathbf{R}}\right)\right]\right\}   \mathbf{n}
\notag \\ && +{\nabla}_{\rm tg} \sigma-\,{\rm div_{tg}^{\it T}} \left(\frac{\partial\sigma}{\partial\mathbf{R}}\mathbf{R}\right)+ \mathbf{R}\,{\rm div_{tg}^{\it T}}\left(	\mathbf{P}\frac{\partial\sigma}{\partial\mathbf{R}}\right) +\mathbf{T}=0. \label{shape}
\end{eqnarray}
Equation \eqref{shape} is the  most general form of the dynamical boundary condition on $S_t$. Due to the fact that surface energy $\sigma$ must be an isotropic function of curvature tensor $\mathbf{R}$, {\it i.e.}   a function of two invariants $H$ and $K$,  we obtain (for proof, see Appendix) that the following vector
\begin{equation*}
 {\nabla}_{\rm tg} \sigma    - {\rm div_{tg}^{\it T}} \left(\frac{\partial\sigma}{\partial\mathbf{R}}\mathbf{R}\right)+ \mathbf{R}\,{\rm div_{tg}^{\it T}}\left(	\mathbf{P}\frac{\partial\sigma}{\partial\mathbf{R}}\right) \label{parterm}
\end{equation*}
is  normal to $S_t$ and consequently $\mathbf{T}$ writes in the form :
\begin{equation*}
\mathbf{T}= - \mathcal{P}\,\mathbf{n}.
\end{equation*}
Here scalar $\mathcal{P}$ has the  dimension of  pressure.\\
One obtains from Eq. \eqref{preshape} (see Appendix) :
 \begin{eqnarray}
\displaystyle H \sigma  -\Delta _{\mathrm{tg}}a &-& b\,\Delta _{\mathrm{tg}}H-\nabla _{\mathrm{tg%
}}^{T}b\,\nabla _{\mathrm{tg}}H-\mathrm{div_{tg}}\left( \mathbf{R}\,\nabla _{%
	\mathrm{tg}}b\right)  \notag \\ 
  \displaystyle &+& \left( 2K-H^{2}\right) \frac{\partial \sigma }{\partial H}-HK\frac{\partial
	\sigma }{\partial K}=\mathcal{P}-p.\label{Tangent_normal}
 \end{eqnarray}
Relation \eqref{Tangent_normal} is the normal component of Eq. \eqref{shape}.\\
It is important to underline that  equation (\ref{shape}) is only expressed in the normal direction to $S_t$. This is not  the case when surface energy $\sigma$ also depends on  physico-chemical characteristics  of $S_t$,  as  temperature or surfactants. In this last case, Marangoni effects can appear  producing  additive tangential terms to $S_t$.\\
 Using Lemma \ref{Lemma 1} (second equation)  and expressions of scalars $a$ and $b$ given by Eq. \eqref{lemmea}, we get the {\it extended shape equation}: 
\begin{eqnarray}
	&H&\left(\sigma-K\frac{\partial\sigma}{\partial K}\right) +\left(2K-H^2\right)\frac{\partial\sigma}{\partial H}-  \Delta_{\rm tg}\frac{\partial\sigma}{\partial H}-H\, \Delta_{\rm tg}\frac{\partial\sigma}{\partial K} \notag\\
	&-&\nabla _{\mathrm{tg}}^{T}H\, \nabla _{\mathrm{tg}}\frac{\partial\sigma}{\partial K}\,+\mathrm{div_{tg}}\left( \mathbf{R}\,\nabla _{%
		\mathrm{tg}}\frac{\partial\sigma}{\partial K}\right) =\mathcal{P}-p.\label{shape2}
\end{eqnarray} 
Equation \eqref{shape2} was also derived in \cite{virga} under the hypothesis (\ref{hypothesis}) and the assumption of inextensibility of the membrane. Our derivation does not use these hypotheses. For example, the inextensibility property is not natural  even in the case of incompressible fluids (at fixed volume, the surface of a 3D body may vary). 
\subsection{Helfrich's shape equation} 
\noindent The Helfrich energy is given by Eq. \eqref{Helfrich_energy}. The shape equation \eqref{shape2} immediately writes in the form :
\begin{equation}
\sigma_0\, H+ \frac{\kappa}{2} \left(H-H_0\right)\left[4K-H\left(H+H_0\right)\right]-\kappa\, \Delta_{\rm tg}\,H=\mathcal{P}-p\,,\label{Helfrich}
\end{equation}
which is the classical form obtained by Helfrich \footnote{Let us note that Helfrich  considered the vesicle as an incompressible fluid. He also assumed that the membrane  has a total constant area. Then, the virtual work can be  expressed as 
	\begin{equation*}
	\delta \mathcal{T} =\iiint_{D}\rho \ \boldsymbol{f}^{T}\boldsymbol{\zeta}%
	~dv+  \iint_{S} \ \boldsymbol{T}^{T}\boldsymbol{\zeta}~ds -\,\delta\iint_S \sigma~ds \, + \lambda_0 \, \delta\iint_S ds + \,  \delta   \iiint_D p\ \rm{div}\ \boldsymbol{\zeta}\ dv , 
	\end{equation*}
	where the scalar  $\lambda_0$ is a constant  Lagrange multiplier  and $p$\,   is a distributed  Lagrange multiplier. The 'shape equation' is similar to \eqref{Helfrich}.}.

\section{Extended Young-Dupr\'e condition on contact line $C_t$}
Let us denote by $\theta=\left\langle\mathbf{n}^\prime,\mathbf{n}^\prime_1 \right\rangle = \pi +\left\langle\mathbf{n},\mathbf{n}_1 \right\rangle [{\rm mod}\, 2\pi ] $ the Young angle between $S_1$ and $S_t$  (see Fig. \ref{fig2}).  
\begin{figure}[h]
	\begin{center}
		\includegraphics[width=12cm]{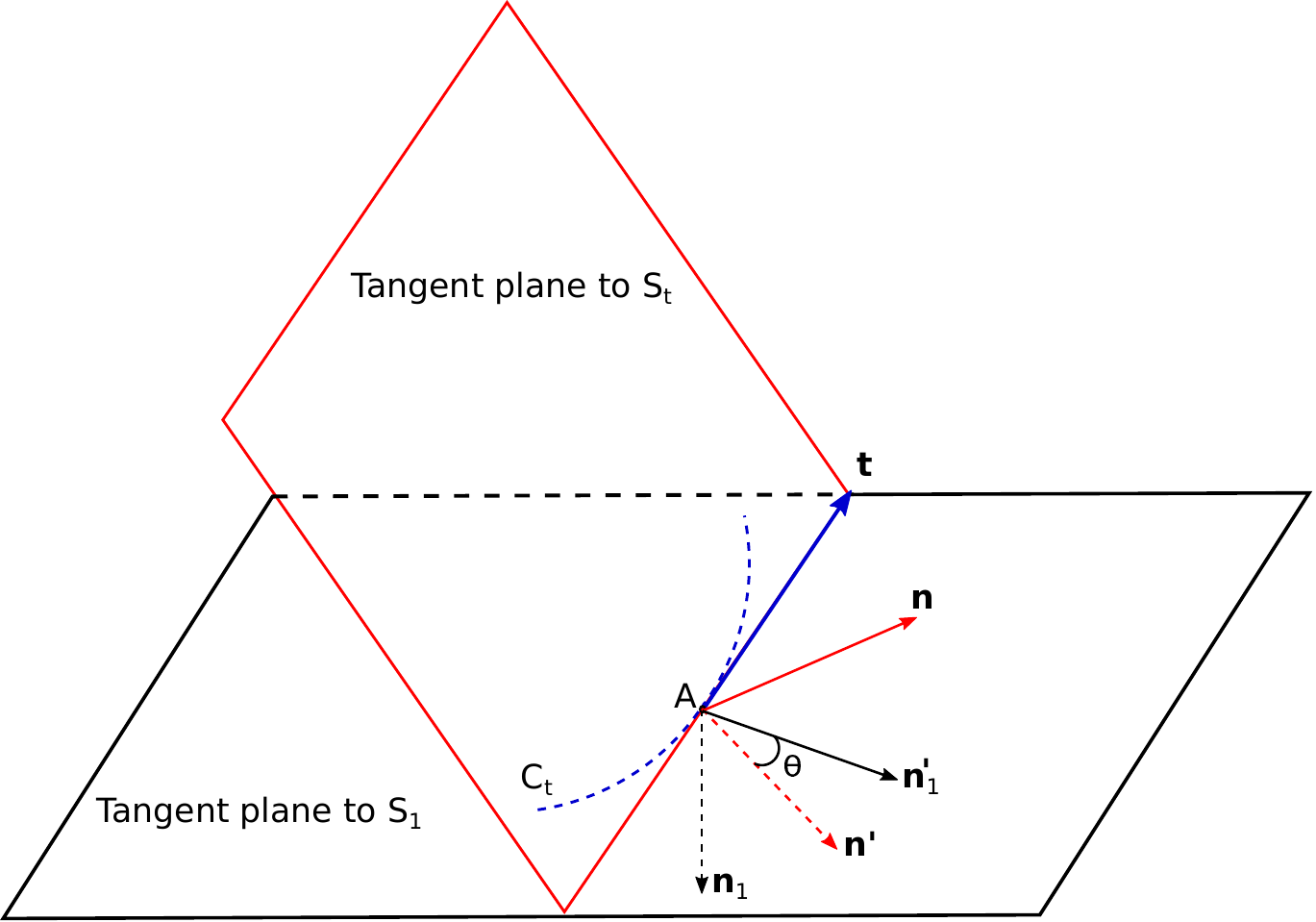}
	\end{center}
	\caption{Tangent planes to membrane $S_t$ and  solid surface $S_1$ : 
		\label{fig2}  $\mathbf {n} _1$
		and $\mathbf {n}$ are the unit normal vectors to $\mathcal{S}$  and $S_t$,
		external to the domain of the vesicle; contact line $C_t$ is shared between $\mathcal{S}$
		and $S_{t}$ and $\mathbf{t} $ is the unit tangent vector to $C_t$
		relatively to $\mathbf {n}$; $\mathbf {n} ^{\prime}_1 = \mathbf {n} 
		_{1}\times\mathbf {t} $ and $ \mathbf {n} ^{\prime}=
		\mathbf {t} \times \boldsymbol{n}$ are   binormals to $C_t$
		relatively to $\mathcal{S}$ and $S_{t}$  at point $A$, respectively. Angle $\theta = \left\langle \mathbf {n} ^{\prime},\mathbf {n} ^{\prime}_1\right\rangle $. {\color{black} The normal plane to $C_t$ at A  contains vectors  $\mathbf {n},  \mathbf {n} ^{\prime},   \mathbf {n}_1,  \mathbf {n}_1 ^{\prime}  $}.} 
\end{figure}

Due to the fact that  $C_t$ belongs to $S_1$, the virtual displacement on $C_t$  is in the form :
\begin{equation}
\boldsymbol{\zeta}=\alpha  \mathbf{t} + \beta \mathbf{n}^\prime_1,  \label{tangentdis}
\end{equation}
{\color{black} where $\alpha$ and $\beta$ are two scalar fields defined  on $S_1$.   Let us remark that condition \eqref{tangentdis} expresses the non-penetration condition \eqref{slip} on $S_1$. Moreover, since $\mathbf{n}$, $\mathbf{n}_1$, $\mathbf{n}_1^\prime$ belong to the normal plane to $C_t$ at $A$ (see Fig. \ref{fig2}), one has : 
\begin{equation}
\boldsymbol{n}= \boldsymbol{n}_1^{\prime}\sin\; \theta-\boldsymbol{n}_1\cos\; \theta.
\label{normal_relation}
\end{equation}
}
But relation $\boldsymbol{\zeta}^T\boldsymbol{n}_1 = 0 $  implies 
\begin{equation*}
\mathbf{P}\left(\frac{\partial \boldsymbol{\zeta}}{\partial \mathbf{x}}\right)^T\mathbf{n}_1+ \mathbf{P}\left(\frac{\partial \mathbf{n}_1}{\partial \mathbf{x}}\right)^T \boldsymbol{\zeta}=0.
\end{equation*}
Replacing \eqref{normal_relation} into \eqref{vwork} one has :
\begin{eqnarray}
& &	\displaystyle  \delta \mathcal{T}= - \int_{C_t}\left\lbrace  \left[ \left( \sigma _{1}-\sigma _{2}\right) \mathbf{n}%
	_{1}^{\prime T}+\sigma \,\mathbf{n}^{\prime T}-\mathbf{n}^{\prime T}{\rm div}_{\rm tg}^T\left(	\mathbf{P}\frac{\partial\sigma}{\partial\mathbf{R}}\right)\mathbf{n} ^T-\mathbf{n}^{\prime T}\frac{\partial\sigma}{\partial\mathbf{R}}{\mathbf{R}}\right] \boldsymbol{%
		\zeta }  \right.\notag\\
	\displaystyle & &\qquad \ \ +\left.\mathbf{n}^{\prime T} \frac{\partial\sigma}{\partial\mathbf{R}} \,\mathbf{P}\left(\frac{\partial \boldsymbol{\zeta}}{\partial \mathbf{x}}\right)^T\mathbf{n}  \right\rbrace dl = \notag\\
	\displaystyle  & & - \int_{C_t}\left\lbrace  \left[ \left( \sigma _{1}-\sigma _{2}\right) \mathbf{n}%
	_{1}^{\prime T}+\sigma \,\mathbf{n}^{\prime T}-\mathbf{n}^{\prime T}{\rm div}_{\rm tg}^T\left(	\mathbf{P}\frac{\partial\sigma}{\partial\mathbf{R}}\right)\mathbf{n} ^T-\mathbf{n}^{\prime T}\frac{\partial\sigma}{\partial\mathbf{R}}{\mathbf{R}}+{\color{black}\cos}\,\theta\,\mathbf{n}^{\prime T}\frac{\partial\sigma}{\partial\mathbf{R}}{\mathbf{P}}\,\left(\frac{\partial\mathbf{n}_1}{\partial\mathbf{x}} \right)^T
\right] \boldsymbol{%
		\zeta }  \right.\notag\\
	\displaystyle & &\qquad \ \ +\ {\sin}\,\theta\,\left.\mathbf{n}^{\prime T} \frac{\partial\sigma}{\partial\mathbf{R}} \,\mathbf{P}\left(\frac{\partial \boldsymbol{\zeta}}{\partial \mathbf{x}}\right)^T\mathbf{n}_1^{\prime}  \right\rbrace dl=0. \label{boundary}
\end{eqnarray}
{\color{black} We choose now the virtual displacement in the form $\boldsymbol{\zeta}=\beta \;  {\mathbf n}_1^{\prime}$. One has :
\begin{equation*}
\frac{\partial \boldsymbol{\zeta}}{\partial \mathbf{x}}={\mathbf n}_1^{\prime}
\; (\nabla \beta)^T+\beta \; \frac{\partial {\mathbf n}_1^{\prime}}{\partial \mathbf{x}}, \quad \left(\frac{\partial \boldsymbol{\zeta}}{\partial \mathbf{x}}\right)^T=\nabla \beta\; {\mathbf n}_1^{\prime T}
+\beta \; \left(\frac{\partial {\mathbf n}_1^{\prime}}{\partial \mathbf{x}}\right)^T.
\end{equation*}
Since $\displaystyle{\left(\frac{\partial{\mathbf n}_1^{\prime}}{\partial \mathbf{x}}\right)^T {\mathbf n}_1^{\prime}}=0$, it implies :
\begin{equation*}
\left(\frac{\partial \boldsymbol{\zeta}}{\partial \mathbf{x}}\right)^T{\mathbf n}_1^{\prime}=\nabla \beta. 
\end{equation*}
The integral \eqref{boundary} becomes :
\begin{eqnarray}
& &	\displaystyle 
\displaystyle \int_{C_t}\left\lbrace  \left[ \left( \sigma _{1}-\sigma _{2}\right) \mathbf{n}%
_{1}^{\prime T}+\sigma \,\mathbf{n}^{\prime T}-\mathbf{n}^{\prime T}{\rm div}_{\rm tg}^T\left(	\mathbf{P}\frac{\partial\sigma}{\partial\mathbf{R}}\right)\mathbf{n} ^T-\mathbf{n}^{\prime T}\frac{\partial\sigma}{\partial\mathbf{R}}{\mathbf{R}}+{\color{black}\cos}\,\theta\,\mathbf{n}^{\prime T}\frac{\partial\sigma}{\partial\mathbf{R}}{\mathbf{P}}\,\left(\frac{\partial\mathbf{n}_1}{\partial\mathbf{x}} \right)^T
\right] \mathbf {n }_1^{\prime}\; \beta  \right.\notag\\
\displaystyle & &\qquad \ \ +\ {\sin}\,\theta\,\left.\mathbf{n}^{\prime T} \frac{\partial\sigma}{\partial\mathbf{R}} \,\mathbf{P}\; \nabla\beta  \right\rbrace dl=0. 
\label{boundary1}
\end{eqnarray}
Since  $\beta$ and the components of $\nabla \beta$  can be choosen   as independent, 
relation \eqref{boundary1}  implies two boundary conditions. The first condition on line $C_t$  is : 
\begin{equation}
{\rm sin}\,\theta\, \mathbf{n}^{\prime T} \frac{\partial\sigma}{\partial\mathbf{R}} \,\mathbf{P} =0. 
\label{bending}
\end{equation}
The second condition is : 
\begin{equation}
\left[ \left( \sigma _{1}-\sigma _{2}\right) \mathbf{n}%
_{1}^{\prime T}+\sigma \,\mathbf{n}^{\prime T}-\mathbf{n}^{\prime T}{\rm div}_{\rm tg}^T\left(	\mathbf{P}\frac{\partial\sigma}{\partial\mathbf{R}}\right)\mathbf{n} ^T-\mathbf{n}^{\prime T}\frac{\partial\sigma}{\partial\mathbf{R}}{\mathbf{R}}+{\color{black}\cos}\,\theta\,\mathbf{n}^{\prime T}\frac{\partial\sigma}{\partial\mathbf{R}}{\mathbf{P}}\,\left(\frac{\partial\mathbf{n}_1}{\partial\mathbf{x}} \right)^T
\right] \mathbf {n }_1^{\prime}=0.
\label{preliminary_condition}
\end{equation}

The case ${\rm sin}\,\theta=0$ all along $C_t$ is degenerate. If $\theta=0$, this corresponds to a hydrophobic surface (the contact line is absent). If $\theta =\pi$, this corresponds to a complete wetting. In the last case ${\mathbf n}_1^\prime=-{\mathbf n}^\prime$, ${\mathbf n}_1={\mathbf n}$, and the condition \eqref{preliminary_condition} becomes trivial : $\sigma_1-\sigma_2-\sigma=0$.

The general case corresponds to the partial wetting ($\sin\; \theta\ne 0$). Due to Eq. \eqref{R},
\begin{equation*}
\mathbf{n}^{\prime T} \frac{\partial\sigma}{\partial\mathbf{R}} \,\mathbf{P} \equiv\mathbf{n}^{\prime T} ( a \;\mathbf{I}+b\,
\mathbf{R}) \mathbf{P}\equiv  a\, \mathbf{n}^{\prime T} +b\, \mathbf{n}^{\prime T} \mathbf{R}\equiv\mathbf{n}^{\prime T} \frac{\partial\sigma}{\partial\mathbf{R}}. 
\end{equation*}
Hence,  Eq. \eqref{bending}   yields
\begin{equation}
\mathbf{n}^{\prime T} \frac{\partial\sigma}{\partial\mathbf{R}}=0 \label{bending1}.
\end{equation}
}
Equation \eqref{bending1} implies (see Lemma \ref{derivative_with_respect_to_R}) : 
\begin{equation*}
\mathbf{n}^{\prime T} \left[\left(\frac{\partial \sigma}{\partial H}+ H \frac{\partial \sigma}{\partial K}\right)\mathbf{I}- \frac{\partial \sigma}{\partial K}\mathbf{R}\right]=0.
\end{equation*}
Consequently,   $\mathbf{n}^{\prime}$ is an eigenvector of $\mathbf{R}$. We denote  $c_{n^\prime}$ the associated eigenvalue $c_2$. Then
\begin{equation}
\frac{\partial \sigma}{\partial H}+ H \frac{\partial \sigma}{\partial K}= c_{n^\prime}\frac{\partial \sigma}{\partial K}.
\label{clamping_1}
\end{equation}
Due to the fact that $\mathbf{t}$ is also  eigenvector of $\mathbf{R}$ with eigenvalue $c_t=c_1$ ($\mathbf{t}$ and $\mathbf{n}^{\prime}$ form the eigenbasis of $\mathbf{R}$ along $C_t$),  we get $H=c_t+c_{n^\prime}$ and the equivalent to the boundary condition \eqref{clamping_1}  in the form :
\begin{equation}
\label{clamping}
\frac{\partial \sigma}{\partial H}+ c_t \frac{\partial \sigma}{\partial K}= 0.
\end{equation} 
From Lemma \ref{derivative_with_respect_to_R}, Eq. \eqref{lemmea}, we immediately deduce :
\begin{equation}
{\rm div_{tg}}\left(	\mathbf{P}\frac{\partial\sigma}{\partial\mathbf{R}}\right) =\nabla_{\rm tg}
^{T}a+\left(a\,H+b\,H^2-2\,b\,K\right)   \mathbf{n}^{T} +\nabla_{\rm tg}
^{T}b\  \mathbf{R}+b\, \nabla_{\rm tg}
^{T}H . 
\label{key4}
\end{equation}
Due to the fact that $\mathbf{n}^{\prime T} \mathbf{n} =0$, we obtain :
\begin{equation*}
\mathbf{n}^{\prime T} \,{\rm div}_{\rm tg}^T\left(	\mathbf{P}\frac{\partial\sigma}{\partial\mathbf{R}}\right) =\mathbf{n}^{\prime T}  \left[\nabla_{\rm tg}
a  +   \mathbf{R}\,\nabla_{\rm tg}
b+b\, \nabla_{\rm tg}H\right]=\mathbf{n}^{\prime T}  \left[\nabla 
a  +   \mathbf{R}\,\nabla 
b+b\, \nabla H\right] 
\end{equation*}
 Consequently, one obtains the second condition on $C_t$ in the form  : 
 \begin{equation}
\sigma _{1} -\sigma _{2}  
+\sigma \, {\rm cos}\,\theta {\color{black}-} {\rm sin}\,\theta\, \mathbf{n}^{\prime T} \left(\nabla a+ b\, \nabla H+  \mathbf{R}\,\nabla b\right)= 0.
 \label{Young1}
 \end{equation}
This is {\it the extended Young-Dupr\'e condition} along contact line $C_t$ between membrane $S_t$ and solid surface $\mathcal{S}$  
 (\footnote {\color{black} The virtual displacement   taken in  the most general form   \eqref{tangentdis}  does not produce new boundary conditions.  Due to the linearity of the virtual work, to prove this property  it is sufficient to take  $ \boldsymbol{\zeta}= \alpha\, \mathbf{t}$. We obtain
\begin{equation*}
\frac{\partial \boldsymbol{\zeta}}{\partial \mathbf{x}}={\mathbf t} 
 (\nabla \alpha)^T+\alpha\; \frac{\partial {\mathbf t}}{\partial \mathbf{x}}, \qquad \left(\frac{\partial \boldsymbol{\zeta}}{\partial \mathbf{x}}\right)^T{\mathbf n}_1^\prime= 
 \alpha\; \left(\frac{\partial {\mathbf t}}{\partial \mathbf{x}}\right)^T {\mathbf n}_1^\prime. 
 \end{equation*}
 Since
 \begin{equation*}
 \frac{\partial {\mathbf t}}{\partial \mathbf{x}} = c\,  {\mathbf N}\, {\mathbf t}^T,
\end{equation*}
where ${\mathbf N}$ is the principal unit normal and $c$ is the curvature along $C_t$, one obtains :
\begin{equation*}
  \left(\frac{\partial \boldsymbol{\zeta}}{\partial \mathbf{x}}\right)^T{\mathbf n}_1^\prime= 
\alpha\;c\; {\mathbf t}\, {\mathbf N}^T \, {\mathbf n}_1^\prime  
\end{equation*}
and
\begin{equation*}
\sin \,\theta\, \mathbf{n}^{\prime T} \frac{\partial\sigma}{\partial\mathbf{R}} \,\mathbf{P}\left(\frac{\partial \boldsymbol{\zeta}}{\partial \mathbf{x}}\right)^T\mathbf{n}_1^{\prime} =\alpha\; c\;\sin \,\theta\,\mathbf{n}^{\prime T} \frac{\partial\sigma}{\partial\mathbf{R}}\;{\mathbf t}\, {\mathbf N}^T \, {\mathbf n}_1^\prime,
\end{equation*}
which is equal to zero thanks to Eq. \eqref{bending1}.\\
Moreover, thanks to Eq. \eqref{bending1}, we immediately obtain that  term
\begin{equation*}
 \left[ \left( \sigma _{1}-\sigma _{2}\right) \mathbf{n}%
_{1}^{\prime T}+\sigma \,\mathbf{n}^{\prime T}-\mathbf{n}^{\prime T}{\rm div}_{\rm tg}^T\left(	\mathbf{P}\frac{\partial\sigma}{\partial\mathbf{R}}\right)\mathbf{n} ^T-\mathbf{n}^{\prime T}\frac{\partial\sigma}{\partial\mathbf{R}}{\mathbf{R}}+ \cos\,\theta\,\mathbf{n}^{\prime T}\frac{\partial\sigma}{\partial\mathbf{R}}{\mathbf{P}}\,\left(\frac{\partial\mathbf{n}_1}{\partial\mathbf{x}} \right)^T
\right] \mathbf {t } \; \alpha
\end{equation*}
is vanishing.  Hence, new boundary conditions do not appear on  $C_t$.
 }).
\\ 
  In the case of  Helfrich's energy given by relation \eqref{Helfrich_energy}, we obtain the extended Young-Dupr\'e condition \eqref{Young1} in the form :
 \begin{equation}
	\sigma _{1}  -  \sigma _{2}  
	+ \sigma  \, {\rm cos}\,\theta {\color{black}-}  \kappa\,{\rm sin}\,\theta\,\mathbf{n}^{\prime T}\nabla H   = 0 .  \label{YoungHelfrich}
\end{equation}
This last condition was previously obtained in \cite{Gouin_2014_vesicle}.
\section{Surfaces of revolution}
\subsection{Shape equation for the surfaces of revolution}

\noindent Along a revolution surface, the invariants of the curvature tensor depend only on $s$ which is the curvilinear abscissa 
of meridian curve denoted by $\Gamma$ \cite{Aleksandrov} : 
 \begin{equation*}
 H=H(s), \quad K=K(s).
 \end{equation*} 
 \begin{figure}[h]
 	\begin{center}
 		\includegraphics[width=12cm]{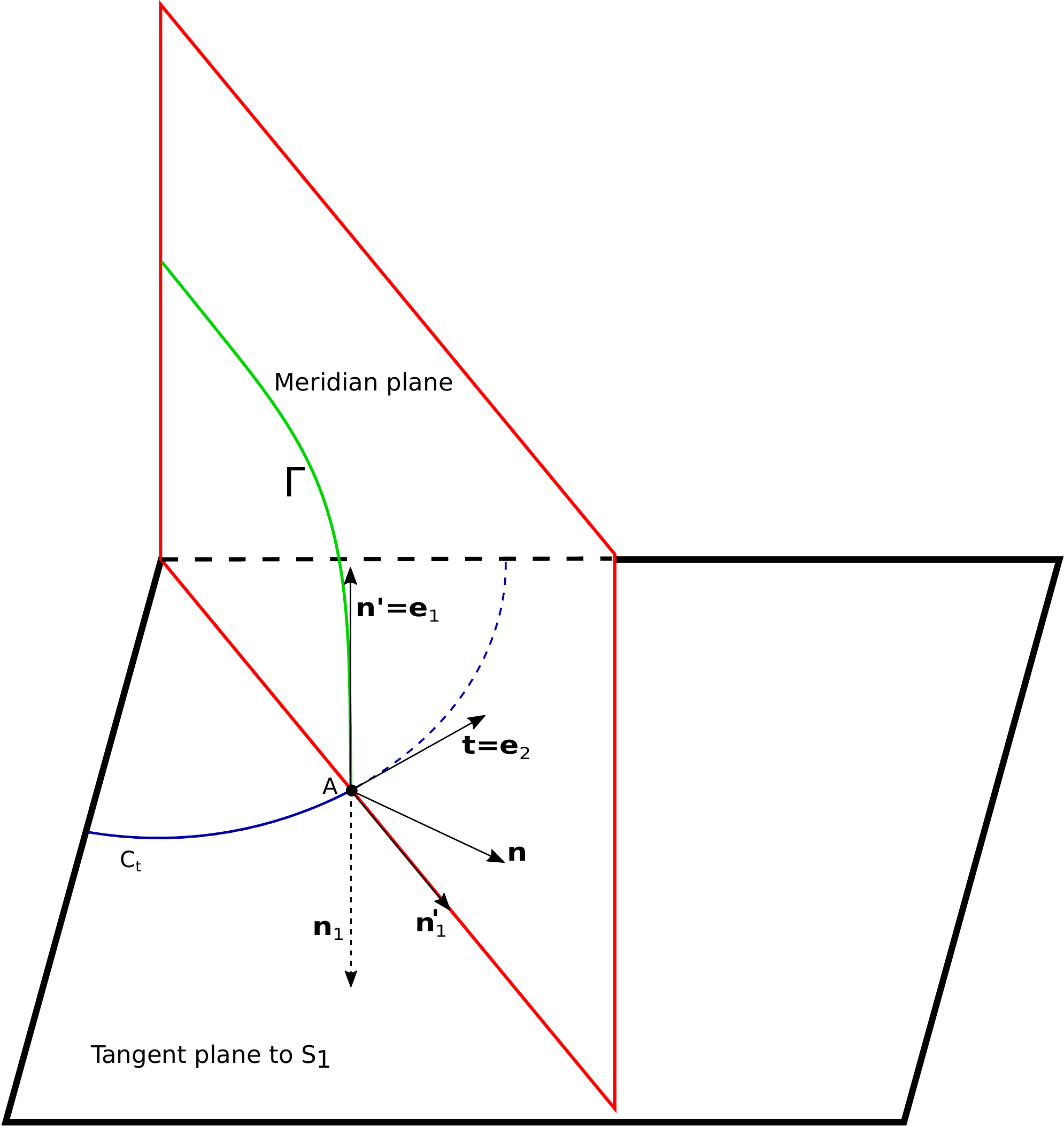}
 	\end{center}
 	\caption{The case of a revolution domain. The line $C_t$  (contact edge between $S_t$ and $S_1$) is a circle with an axis which is the revolution axis collinear to ${\mathbf{n}}_1$. The meridian curve is denoted $\Gamma$; normal vector ${\mathbf{n}}$ and binormal vector   ${\mathbf{n}}^\prime$  are in the meridian plane of   revolution surface  $S_t$. We have $\mathbf{n}^\prime= \mathbf{e}_1$ and $\mathbf{t}= \mathbf{e}_2$, corresponding to the eigenvectors of the curvature tensor $\mathbf R$ at $A$.}
 	\label{fig3}
 \end{figure} 
 One of the eigenvectors, denoted ${\mathbf{e}}_1$, of the curvature tensor  $\mathbf R$ is tangent to  meridian curve  $\Gamma$ (see Fig. \ref{fig3}).
Let us remark that for any function $f(s)$, one has :
\begin{equation*}
\nabla_{\mathrm{tg}}\,f=\frac{df}{ds}\,\mathbf{e}_1, \quad \Delta_{\mathrm{tg}} \, f=\frac{d^2f}{ds^2}.
\end{equation*}
Indeed, the first equation is the  definition of the tangential gradient. The second equality is obtained as follows :
\begin{equation*}
{\rm div}_{\mathrm{tg}} \left(\frac{df}{ds} {\mathbf e}_1\right) ={\rm tr}\left({\mathbf P}\frac{\partial }{\partial {\mathbf x}}\left(\frac{df}{ds}{\mathbf e}_1\right)\right)
={\rm tr}\left({\mathbf P}\frac{d}{ds}\left(\frac{df}{ds}{\mathbf e}_1\right)\otimes{\mathbf e}_1\right)
\end{equation*} 
\begin{equation*}
={\rm tr}\left(\frac{d^2f}{ds^2}{\mathbf P}{\mathbf e}_1\otimes{\mathbf e}_1+c_1(s)\frac{df}{ds}{\mathbf n}\otimes{\mathbf e}_1\right)=\frac{d^2f}{ds^2}.
\end{equation*} 
The Fr\'enet formula was used here :
\begin{equation*}
\frac{d{\mathbf e}_1}{ds}=c_1{\mathbf n}.
\end{equation*} 
Also, 
\begin{equation*}
{\rm div}_{\mathrm{tg}} \left(\mathbf{R}\nabla_{\mathrm{tg}}\,f\right) ={\rm div}_{\mathrm{tg}} \left(\frac{df}{ds}\mathbf{R}{\mathbf e}_1\right)={\rm div}_{\mathrm{tg}} \left(\frac{df}{ds}c_1{\mathbf e}_1\right)
=\frac{d}{ds}\left(c_1\displaystyle{\frac{df}{ds}}\right).
 \end{equation*} 
  For surfaces of revolution  the shape equation \eqref{shape2} becomes :
 \begin{eqnarray*}
 &H&\left(\sigma-K\frac{\partial\sigma}{\partial K}\right) +\left(2K-H^2\right)\frac{\partial\sigma}{\partial H}-\frac{d^2}{ds^2}\left(\frac{\partial\sigma}{\partial H}\right)-H\, \frac{d^2}{ds^2}\left(\frac{\partial\sigma}{\partial K}\right) \notag\\
 &-&\frac{dH}{ds}\frac{d}{ds}\left(\frac{\partial\sigma}{\partial K}\right)+\frac{d}{ds}\left(c_1\frac{d}{ds}\left(\frac{\partial\sigma}{\partial K}\right)\right) =\mathcal{P}-p.
 \end{eqnarray*}

\subsection{Extended  Young--Dupr\'e condition for surfaces of revolution} 
One has along $C_t$, ${\mathbf t} ={\mathbf e}_2$, ${\mathbf n}^\prime ={\mathbf e}_1$. It implies ${\mathbf n}^{\prime T}\mathbf{R}\, {\mathbf t} =0$. Also, one has :
\begin{equation*} 
\mathbf{n}^{\prime T} \left(\nabla a+ b\, \nabla H+  \mathbf{R}\,\nabla b\right)=\frac{da}{ds}+b\,\frac{dH}{ds}+c_1\,\frac{db}{ds}.
\end{equation*}
The Young -- Dupr{\'e} condition \eqref{Young1} becomes :
 \begin{equation*} 
\sigma_1-\sigma_2\,{\rm cos}\, \theta{\color{black}-}{\rm sin}\,\theta \left(\frac{da}{ds}+b\,\frac{dH}{ds}+c_t\,\frac{db}{ds} \right)  = 0.
\end{equation*}
Since 
 \begin{equation*} 
a=\frac{\partial \sigma}{\partial H}+H\frac{\partial \sigma}{\partial K}, \quad b=-\frac{\partial \sigma}{\partial K},
\end{equation*}
one finally obtains :
 \begin{equation*} 
\sigma_1-\sigma_2\,{\rm cos}\, \theta {\color{black}-}{\rm sin}\,\theta \left[\frac{d}{ds}\left(\frac{\partial \sigma}{\partial H}\right)+c_{n^\prime} \frac{d}{ds}\left(\frac{\partial \sigma}{\partial K}\right)\right]  = 0.
\end{equation*}
For the Helfrich energy \eqref{Helfrich_energy} this expression yields :
\begin{equation*} 
\sigma_1-\sigma_2\,{\rm cos}\, \theta {\color{black}-}\kappa \frac{dH}{ds}  \, {\rm sin}\,\theta    = 0.
\end{equation*}

\section{Conclusion}
 Membranes can be considered as material surfaces   endowed with a surface energy density depending on the invariants of the curvature tensor : $\sigma=\sigma(H,K)$.  By using the principle of virtual working, we derived the boundary conditions on the moving membranes (``shape equation") as well as two boundary conditions  on the contact line.   In  limit cases, we recover classical boundary conditions. The ``shape equation" and the boundary conditions are summarized below in the non-degenerate case (see (\ref{shape2}), (\ref{clamping}), (\ref{Young1})) as  
 \begin{itemize}
 	 {\color{black}
\item {\it the equation for the moving surface $S_t$  }:\\

$
	 \bullet\quad  \displaystyle  H \left(\sigma-K\frac{\partial\sigma}{\partial K}\right) +\left(2K-H^2\right)\frac{\partial\sigma}{\partial H}-  \Delta_{\rm tg}\frac{\partial\sigma}{\partial H}-H\, \Delta_{\rm tg}\frac{\partial\sigma}{\partial K} $\\
	 
	$ \quad\ \ \displaystyle  -\, \nabla _{\mathrm{tg}}^{T}H\, \nabla _{\mathrm{tg}}\frac{\partial\sigma}{\partial K}\,+\mathrm{div_{tg}}\left( \mathbf{R}\,\nabla _{%
		\mathrm{tg}}\frac{\partial\sigma}{\partial K}\right) =\mathcal{P}-p. 
$ \\

\item {\it the clamping condition  on the moving line $C_t$}  :\\

$ \bullet \quad \displaystyle\frac{\partial \sigma}{\partial H}+ c_t \frac{\partial \sigma}{\partial K}= 0,
$
\\

Also,  ($\mathbf{t}$, $\mathbf{n}$, $\mathbf{n}^\prime$) - which is the {\it Darboux frame} - are the  eigenvectors of  curvature tensor ${\mathbf R}$.
\vskip 0.3cm

\item {\it dynamic generalization of the Young-Dupr\'{e} condition on $C_t$} :\\

$
\bullet \quad \sigma _{1} -\sigma _{2}  
 +\sigma \, {\rm cos}\,\theta {\color{black}-} {\rm sin}\,\theta\, \mathbf{n}^{\prime T} \left(\nabla_{\rm tg}\left(\displaystyle\frac{\partial \sigma}{\partial H}\right)+\left(H{\mathbf P}-{\mathbf R}\right)\nabla_{\rm tg}\displaystyle\left(\frac{\partial \sigma}{\partial K}\right)\right)=0.
$} 
\end{itemize}
 {In the case of Helfrich's energy the generalization of Young-Dupr\'{e} condition is reduced to equation \eqref{YoungHelfrich}: 
 \begin{equation*}
 \sigma _{1} -\sigma _{2}  
 +\sigma \, {\rm cos}\,\theta {\color{black}-} \kappa \;  {\rm sin}\,\theta \;  \mathbf{n}^{\prime T}\nabla_{\rm tg} H=0.
 \end{equation*}
The last term,  corresponding to the variation of the mean curvature of $S_t$ in the binormal direction at the contact line, can dominate the other terms.  It could be interpreted as a line tension term usually added in the models with  constant surface energy (cf. \cite{Babak}). It should also be noted that the droplet volume  has no effect in  the classical Young-Dupr\'e condition. This is not the case for the generalized Young-Dupr\'e condition since the curvatures can become very large for very small droplets  (they are inversely proportional to the droplet  size). 
The clamping condition for the Helfrich energy fixes the value of $H$ on the contact line :
\begin{equation*}
H=H_0-c_t\frac{\bar \kappa}{\kappa}.
\end{equation*}
The new shape equation and boundary conditions can  be used for solving  dynamic problems. This could be, for example,  the study of the ``fingering" phenomenon appearing as a result of the non-linear instability of a moving contact line. This complicated problem will be studied  in the future.  
}\\

\textbf{Acknowledgments}. The authors thank the anonymous referees for their  noteworthy and helpful remarks that greatly contributed to improve the final version of the paper.

 \section{Appendix}
Since $\sigma= \sigma(H,K)$, we get :
 \begin{equation}
 \nabla_{\rm tg} \sigma= \frac{\partial\sigma}{\partial H} \nabla_{\rm tg} H + \frac{\partial\sigma}{\partial 
 	K} \nabla_{\rm tg} K .\label{Gradsigma}
 \end{equation}
 From Eq. \eqref{lemmea}, we obtain :
 \begin{equation}
 {\rm div_{tg}}\left(	\mathbf{P}\frac{\partial\sigma}{\partial\mathbf{R}}\right) =\nabla_{\rm tg}
 ^{T}a+\left(a\,H+b\,H^2-2\,b\,K\right)   \mathbf{n}^{T} +\nabla_{\rm tg}
 ^{T}b\  \mathbf{R}+b\, \nabla_{\rm tg}
 ^{T}H . 
 \label{key4}
 \end{equation}
 Also, one has :
 \begin{equation*}  
 {\rm div_{tg}}\left(\frac{\partial\sigma}{\partial\mathbf{R} 	}\,\mathbf{R}\right) ={\rm div_{tg}}\left( a\, 	\mathbf{R}\right)+{\rm div_{tg}}\left( b\, 	\mathbf{R}^2\right).
 \end{equation*}
 Due to \eqref{5}, one has :
 \begin{eqnarray*}
 	&&{\rm div_{tg}}\left(a\, 	\mathbf{R}\right)= (\nabla_{\rm tg}^T a)\	\mathbf{R}
 	+a\, \nabla_{\rm tg}^TH+ a\left(H^2-2\,K\right)	\mathbf{n}^T,\\
 	&&{\rm div_{tg}}\left(b\, 	\mathbf{R}^2\right) =    {\rm div_{tg}} [b\left(H	\mathbf{R}-K\mathbf{P}\right)]\\
 	&&  =    \nabla_{\rm tg}^T( b H)\	\mathbf{R} +b\,H\left[\nabla_{\rm tg}^T H+\left(H^2-2\,K\right)	\mathbf{n}^T \right]   -  \nabla_{\rm tg}^T (b K) -b\,K  H  \mathbf{n}^T       .               \end{eqnarray*}
 Consequently,
 \begin{eqnarray}
 {\rm div_{tg}}\left(\frac{\partial\sigma}{\partial\mathbf{R} 	}\,\mathbf{R}\right) &= &
 \left(\nabla_{\rm tg}^T (a+b\,H)\right)\,\mathbf{R} \label{key5}\\
 &+&  (a+b\,H)\,\nabla_{\rm tg}^T H - \nabla_{\rm tg}^T( bK) + (aH^2+bH^3-2aK-3bHK)\mathbf{n}^T . \notag
 \end{eqnarray}
 From relations \eqref{Gradsigma}, \eqref{key4}, \eqref{key5}, we deduce :
 \begin{equation*}
 {\nabla}_{\rm tg} \sigma   - {\rm div_{tg}^{\it T}} \left(\frac{\partial\sigma}{\partial\mathbf{R}}\mathbf{R}\right)+ \mathbf{R}\,{\rm div_{tg}^{\it T}}\left(	\mathbf{P}\frac{\partial\sigma}{\partial\mathbf{R}}\right) =   \left( 2\,a K+3\,b H K-a\,H^2-b\,H^3\right)  \mathbf{n}.\label{normalterm}
 \end{equation*}
Using \eqref{key4}, one obtains :
\begin{equation*}
\mathbf{P}\,{\rm div_{tg}^{\it T}} \left(	\mathbf{P}\frac{\partial\sigma}{\partial\mathbf{R}}\right) =\nabla_{\rm tg}\, a+\mathbf{R}\,\nabla_{\rm tg}\, b+b\, \nabla_{\rm tg}\, H. 
\end{equation*}
One deduces : 
\begin{equation}
{\rm div_{tg}}  \left[\mathbf{P}\,{\rm div_{tg}^{\it T}} \left(	\mathbf{P}\frac{\partial\sigma}{\partial\mathbf{R}}\right)\right] =\Delta_{\rm tg}  a\, + {\rm div_{tg}}\left({\mathbf{R}\,\nabla_{\rm tg}  b }\right)+ \, b\, \Delta_{\rm tg} H+{\nabla_{\rm tg}^T} b\, \nabla_{\rm tg} H. \label{key6}
\end{equation}
 From relations \eqref{Gradsigma}, \eqref{key4},  \eqref{key5}, we deduce : 
 \begin{eqnarray*}
 	&&{\nabla}_{\rm tg} \sigma   - {\rm div_{tg}^{\it T}} \left(\frac{\partial\sigma}{\partial\mathbf{R}}\mathbf{R}\right)+ \mathbf{R}\,{\rm div_{tg}^{\it T}}\left(	\mathbf{P}\frac{\partial\sigma}{\partial\mathbf{R}}\right) = \\
 	&&\left( 2\,a\,K+3\,b\,H\,K-a\,H^2-b\,H^3\right)  \mathbf{n}\notag\\
 	&& + \frac{\partial\sigma}{\partial H}\, \nabla_{\rm tg}H +\frac{\partial\sigma}{\partial K}\, \nabla_{\rm tg}K-\mathbf{R}\,\nabla_{tg} \left(a+b\,H\right)- \left(a+b\,H\right)\, \nabla_{\rm tg}H +\, \nabla_{\rm tg}(b\,K)\notag\\
 	&&+\mathbf{R}\,\nabla_{tg} a+\left(a\,H +b\,H^2- 2\,b\,K\right) \mathbf{R}\,\mathbf{n}+\mathbf{R}^2\,\nabla_{tg}b+b\,\mathbf{R} \,\nabla_{tg}H +\mathbf{T}={\mathbf 0}.  
 \end{eqnarray*}
 Using relations $\mathbf{R}\,\mathbf{n}=\mathbf{0}$, Eq. \eqref{Lemma 1}$_3$ and expressions of $a$ and $b$ given by Eq. \eqref{lemmea}, we obtain :
 \begin{eqnarray*}
 	&&  \frac{\partial\sigma}{\partial H}\, \nabla_{\rm tg}H +\frac{\partial\sigma}{\partial K}\, \nabla_{\rm tg}K-\mathbf{R}\,\nabla_{tg} \left(a+b\,H\right)- \left(a+b\,H\right)\, \nabla_{\rm tg}H +\, \nabla_{\rm tg}(b\,K)\notag\\
 	&&+\mathbf{R}\,\nabla_{tg} a+\left(a\,H +b\,H^2- 2\,b\,K\right) \mathbf{R}\,\mathbf{n}+\mathbf{R}^2\,\nabla_{tg}b+b\,\mathbf{R} \,\nabla_{tg}H =  \mathbf{0}.
 \end{eqnarray*}
 Consequently,  
 \begin{equation*}
 {\nabla}_{\rm tg} \sigma   - {\rm div_{tg}^{\it T}} \left(\frac{\partial\sigma}{\partial\mathbf{R}}\mathbf{R}\right)+ \mathbf{R}\,{\rm div_{tg}^{\it T}}\left(	\mathbf{P}\frac{\partial\sigma}{\partial\mathbf{R}}\right) =   \left( 2\,a K+3\,b H K-a\,H^2-b\,H^3\right)  \mathbf{n}.\label{normalterm}
 \end{equation*}
 Finally, using \eqref{key6}, one obtains : 
 \begin{eqnarray}
 &&        \quad \left[p+H \sigma-\Delta_{\rm tg}a - b\,  \Delta_{\rm tg}H-\nabla_{\rm tg}^T b\ \nabla_{\rm tg}H - {\rm div_{tg}}\left(\mathbf{R}\,\nabla_{tg} b\right)\right. \notag\\ &&\left.+\left( 2\,a\,K+3\,b\,H\,K-a\,H^2-b\,H^3\right)  \right]\mathbf{n}+\mathbf{T}={\mathbf{0}}, \label{preshape} 
 \end{eqnarray}
 where all tangential terms disappear in the boundary condition on $S_t$.


\begin{thebibliography}{99}



\bibitem{Alberts} Alberts,  B., Johnson, A., Lewis, J., Raff, M., Roberts, K.,  Walter, P. : Molecular biology of the cell, 4th edn.  Garland Science, New York (2002).

\bibitem{Aleksandrov}  Aleksandrov, A.D.,  Zalgaller, V.A. : Intrinsic Geometry of Surfaces, AMS, coll.  Translations of Mathematical Monographs   n$^o$ 15  (1967).

 \bibitem{Babak} Babak, V. G.: Stability of the lenslike liquid thickening (the drop) on a solid substrate, J. Adhesion, \textbf{80}, 685--703 (2007).


\bibitem{Berdichevsky} Berdichevsky, V. L. : Variational Principles of Continuum Mechanics: I. Fundamentals.  Springer, New York (2009).

\bibitem{Biscari}   Biscari, P., Canavese, S.M., Napoli, G. : Impermeability effects in three-dimensional vesicles, J. Phys. A: Math. Gen. \textbf{37}, 6859--6874 (2004).

\bibitem{Capovilla} Capovilla, R. Guven, J. : Stresses in lipid membranes, J. Physics A: Mathematics and General, {\bf 35}, 6233-6247 (2002).

\bibitem{Fournier} Fournier, J.B. : On the stress and torque tensors in fluid membranes, Soft Matter \textbf{3}, 883--888 (2007).

\bibitem{SGHG} Gavrilyuk, S., Gouin H. : A new form of governing equations of fluids arising from Hamilton's principle, int. J. Eng. Sci. \textbf{37}, 1495--1520 (1999)  \& 	arXiv:0801.2333.

\bibitem{Gavrilyuk} Gavrilyuk, S. : Multiphase Flow Modeling via Hamilton's principle. In : Variational Models and Methods in Solid and Fluid Mechanics, CISM Courses and Lectures, v. 535 (Eds. F. dell'Isola and S. Gavrilyuk), Springer, Berlin (2011). 

\bibitem{Germain_1973} Germain, P. : The method of the virtual power in continuum mechanics. - Part 2: microstructure, SIAM J. Appl. Math. {\bf 25}, 556--575 (1973).

\bibitem{Gouin_2007} Gouin, H. :  The d'Alembert-Lagrange principle for
gradient theories and boundary conditions, in: Ruggeri, T., Sammartino, M.
(eds.), Asymptotic Methods in Nonlinear Wave Phenomena, p.p. 79-95, World Scientific, Singapore  (2007) \& arXiv:0801.2098.

\bibitem{Gouin_2014}  Gouin,  H.: Interfaces endowed with non-constant surface energies revisited with the d'Alembert-Lagrange principle, Mathematics and Mechanics of Complex Systems, \textbf{2}, 23-43 (2014)    \& arXiv:1311.1140.

\bibitem{Gouin_2014_vesicle} Gouin,  H. : Vesicle Model with Bending Energy Revisited, Acta Appl. Math. {\bf 132}, 347-358 (2014) \& 	arXiv:1510.04824.

\bibitem{Helfrich} Helfrich,  W. : Elastic properties of lipid bilayers:
theory and possible experiments, Z. Naturforsch. C \textbf{28}, 693--703 (1973). 

\bibitem{Lipowsky} Lipowsky, R., Sackmann, E. (eds.) :   Structure and dynamics of membranes, Handbook of Biological Physics. Vol. 1A and Vol. 1B,  Elsevier, Amsterdam (1995). 


\bibitem{Napoli}  Napoli, G., Vergori, L. :  Equilibrium of nematics vesicles, J.  Phys. A: Math.  Theo. \textbf{43}, 445207 (2010). 

\bibitem{Helfrich1} Ou-Yang Zhong Can,  Helfrich, W. :  Bending energy of vesicle membranes: General expressions for the first, second, and third variation of the shape energy and applications to spheres and cylinders, Physical Review A  \textbf{39}, 5280--5288 (1989).

\bibitem{Rocard}   Rocard, Y. : Thermodynamique, Masson, Paris (1952).

\bibitem{virga}  Rosso R., Virga E.G. : Adhesive borders of lipid membranes. Proceedings of the Royal Society of London A455, 41454168, (1999).

\bibitem{Schwartz}  Schwartz, L. :    Th\'{e}orie des distributions. Ch.  3,
Hermann, Paris (1966).

\bibitem{seifert}Seifert  U. : Configurations of fluid membranes and vesicles. Advances in Physics, 46, 13-137, (1997).

\bibitem{Serrin} Serrin, J. :  Mathematical principles of classical fluid
mechanics, in Encyclopedia of Physics VIII/1,   S. Fl\"ugge (ed.), p.p. 125-263,
Springer, Berlin (1960).

\bibitem{Steigmann1} Steigmann, D.J., Li,  D. :   Energy minimizing states of capillary systems with bulk, surface and line phases,  IMA J. Appl. Math. \textbf{55}, 1-17  (1995).

\bibitem{Tu} Tu, Z. C. :  Geometry of membranes, J. Geom. Symm. Phys. \textbf{24}, 45-75 (2011).

\bibitem{Willmore} Willmore, T. J. :  Riemannian Geometry, Clarendon Press, Oxford (1996). 






















\end{thebibliography}
\end{document}